\newtheorem{theorem}{Theorem}[section]
\newtheorem{lemma}[theorem]{Lemma}
\newtheorem{proposition}[theorem]{Proposition}
\theoremstyle{definition}
\newtheorem{definition}[theorem]{Definition}
\newtheorem{example}[theorem]{Example}
\def\th@plain{%
  \thm@notefont{}
  \itshape 
}
\def\th@definition{%
  \thm@notefont{}
  \normalfont 
}
\newcommand{\tot}{\leftrightarrow}
\newcommand{\pa}{\alpha}
\newcommand{\pb}{\beta}
\newcommand{\apa}{\mathrm{a}}
\newcommand{\apb}{\mathrm{b}}
\newcommand{\fp}{\varphi}
\newcommand{\fq}{\psi}
\newcommand{\afp}{\mathrm{p}}
\newcommand{\afq}{\mathrm{q}}
\title{Finitely-valued Propositional Dynamic Logic}
\author{Igor Sedl\'ar\footnote{E-mail: sedlar@cs.cas.cz. This work was
supported by the Czech Science Foundation grant GJ18-19162Y for the project \emph{Non-Classical Logical Models of Information Dynamics}. Some results presented here were obtained while the author was visiting Manuel A.\ Martins and Alexandre Madeira at the University of Aveiro; our discussions helped to initiate this work and shaped the paper. The author is also grateful to Petr Cintula and Libor B\v{e}hounek for discussions on their earlier work on many-valued dynamic logic and the audience of the Seminar of Applied Mathematical Logic at the Institute of Computer Science CAS. The valuable comments of the anonymous reviewers are gratefully acknowledged.

\medskip

\emph{Note:} This is a preprint of the article I. Sedl\'ar: Finitely-valued propositional dynamic logics. In: Nicola Olivetti, Rineke Verbrugge, Sara Negri, Gabriel Sandu (Eds.), Advances in Modal Logic, Volume 13, pp. 561-579. College Publications, 2020.
}}
\affil{The Czech Academy of Sciences, Institute of Computer Science \authorcr 
Pod Vod\'arenskou v\v{e}\v{z}\'i 271/2\authorcr Prague, The Czech Republic}
\date{}
\begin{document}

\maketitle

  \begin{abstract}
We study a many-valued generalization of Propositional Dynamic Logic where formulas in states and accessibility relations between states of a Kripke model are evaluated in a finite FL-algebra. One natural interpretation of this framework is related to reasoning about costs of performing structured actions. We prove that PDL over any finite FL-algebra is decidable. We also establish a general completeness result for a class of PDLs based on commutative integral FL-algebras with canonical constants.

\smallskip

\emph{Keywords:}
 FL-algebras,
Many-valued modal logic,
Propositional Dynamic Logic,
Residuated lattices,
Substructural logics,
Weighted structures.
  \end{abstract}

\section{Introduction}
Propositional dynamic logic, PDL, is a well-known modal logic formalizing reasoning about structured actions, e.g.\ computer programs or actions performed by physical agents, and their correctness properties \cite{Fischer1979,Harel2000}.  PDL is subject to two limiting design features. First, being based on classical propositional logic, it formalizes actions that modify values of Boolean variables. A more general setting, one where variables take values from an arbitrary set (integers, characters, trees etc.), is offered by variants of first-order Dynamic Logic, DL \cite{Harel1979,Harel2000}; these variants, however, are mostly undecidable. Second, PDL can express the fact that one action is guaranteed to attain a certain goal while another action is not, but it is not able to express that one action is a more efficient way of attaining the goal than another action. In other words, accessibility between states mediated by actions is modelled as a crisp rather than a graded relation; the former approach is a convenient idealization, but the latter one is more realistic and often also practically required. 

Both of these limitations of ``classical'' PDL are avoided in a \emph{many-valued} setting. In such a setting, values of formulas in states of a Kripke model are taken from an algebra that is typically distinct from the two-element Boolean algebra used in classical PDL. In a many-valued setting, accessibility between states can also be evaluated in such an algebra, naturally leading to a representation of ``costs'' or other ``weights'' associated with performing actions under specific circumstances. 

Research into many-valued modal logics dates back to the 1960s, see the pioneering \cite{Segerberg1967} and the later \cite{Ostermann1988}. Fitting \cite{Fitting1991,Fitting1992} was the first to study modal logics where both formulas in states and accessibility relations between states in the Kripke model take values from a non-Boolean algebra. Fitting considers finite Heyting algebras; generalizations studied for example in \cite{Bou2011,Caicedo2015,Caicedo2010,Hansoul2013,Vidal2017} focus on various kinds of finite or infinite \emph{residuated lattices} \cite{Galatos2007}. Residuated lattices are algebraic structures related to \emph{substructural logics}, with many important special cases such as Boolean and Heyting algebras, relation algebras, lattice-ordered groups, powersets of monoids, various algebras on the $[0,1]$-interval and so on. 

Investigations of PDL based on residuated lattices are relatively scarce. The work in \cite{Behounek2008,Hughes2006,Liau1999} focuses on expressivity of  PDL with many-valued accessibility, but technical results such as decidability or completeness are not provided. Teheux \cite{Teheux2014} establishes decidability and completeness of PDLs based on finite {\L}ukasiewicz chains and the present author \cite{Sedlar2016} establishes decidability and completeness of PDL extending the paraconsistent modal logic of \cite{Odintsov2010}; both papers, however, deal with crisp acessibility relations. As an attempt to sytematize the work in many-valued PDL, Madeira et al.\ \cite{Madeira2016,Madeira2015} put forward a general method of producing many-valued versions of PDL, based on the matrix representation of Kleene algebras; their method, however, applies only if models are defined to be finite. 

In this paper we add to this literature by studying PDLs based on finite Full Lambek algebras, that is, residuated lattices with a distinguished, though arbitrary, $0$ element. We assume that both evaluations of formulas in states and accessibility between states are many-valued. Our main technical results are general completeness and decidability proofs for logics in the family. To the best of our knowledge, our results are the first decidability and completeness results concerning non-crisp many-valued PDL. To be more specific, we work with versions of test-free PDL based on finite Full Lambek algebras with canonical constants; we prove that any PDL based on a finite FL-algebra with canonical constants is decidable; we also establish a completeness result for PDLs based on finite commutative integral FL-algebras with canonical constants.

The paper is structured as follows. Section \ref{sec--PRE} introduces the general framework of PDL based on finite FL-algebras. We note that, for technical reasons discussed in \S\ref{sec--ISS}, our version of PDL uses the transitive closure operator, or Kleene plus, as primitive instead of the more standard reflexive transitive closure operator, the Kleene star. An informal interpretation of the framework is discussed in \S\ref{sec--MOT}. Section \ref{sec--DEC} establishes our decidability result using a generalization of the smallest filtration technique. Section \ref{sec--COM} establishes the completeness result for PDLs based on finite integral commutative FL-algebras with canonical constants. Our work there builds on the results of \cite{Bou2011}, but the canonical model construction used in our proof is novel to this paper (it is a suitable generalization of the greatest filtration construction, though the model itself is infinite). 

\section{Preliminaries}\label{sec--PRE}

In this section we briefly recall two-valued PDL (\S\ref{sec--PRE1}), and we define FL-algebras and many-valued models for the language of PDL based on them (\S\ref{sec--PRE2}). We point out some basic facts that we will use later on.

\subsection{Two-valued PDL}\label{sec--PRE1}

We begin by recalling some well-known facts about two-valued test-free PDL; see \cite{Harel2000}. Fix $Ac = \{ \apa_{i} \mid i \in \omega \}$, a countable set of atomic action expressions. The set of \emph{standard action expressions}, $\mathit{STA}$, is the closure of $Ac$ under applying binary operators $;$ (``composition''), $\cup$ (``choice'') and unary $*$ (``Kleene star''). That is, $\mathit{STA}$ are regular expressions over $Ac$ without the empty expression. For example, $(\apa_0 ; \apa_1)^{*} \cup \apa_0 $ is in $ \mathit{STA}$. Let $Pr = \{ \afp_i \mid i \in \omega \}$ be a countable set of propositional variables. Take $\bm{2}$, the two-element Boolean algebra on the set $\{ 0, 1 \}$ with meet $\sqcap$, join $\sqcup$ and complement $\mathord{-}$; the binary operation $\Rightarrow$ is defined as usual: $a \Rightarrow b := \mathord{-} a \sqcup b$. Formulas of the \emph{standard language for $\bm{2}$}, $Fm(\mathcal{L}^{\mathit{STA}}_{\bm{2}})$, are defined by
\[
\fp \: := \:  \afp \mid \bar{c} \mid \fp \land \fp \mid \fp \lor \fp \mid \fp \to \fp \mid [\pa]\fp
\] 
where $\afp \in Pr$, $c \in \bm{2}$ and $\pa \in \mathit{STA} $. For example, $\afp_0 \to [\apa_0 ; (\apa_1)^{*}] (\afp_1 \to \bar{0})$ is a formula of $\mathcal{L}^{\mathit{STA}}_{\bm{2}}$.

A \emph{$\bm{2}$-valued frame for $\mathit{STA}$} is $\mathfrak{F} = (S, \{ R_{\pa} \}_{\pa \in \mathit{STA}})$ where $S$ is a non-empty set and, for each $\pa \in \mathit{STA}$, $R_{\pa}$ is a function from $S \times S$ to $\bm{2}$. We denote $R(\pa) := \{ (s,t) \mid R_{\pa}(s,t) = 1 \}$; and the functions in $\{ R_{\pa} \}_{\pa \in \mathit{STA}}$ are required to satisfy the following:
\begin{enumerate*}
\item $R(\pa \cup \pb) = R(\pa) \cup R(\pb)$;
\item $R(\pa ; \pb) = R(\pa) \circ R(\pb)$, the composition of $R(\pa)$ and $R(\pb)$;
\item $R(\pa^{*}) = R(\pa)^{*}$, the reflexive transitive closure of $R(\pa)$.
\end{enumerate*}

Let $\mathfrak{F} = (S, \{ R_{\pa} \}_{\pa \in \mathit{STA}})$ be a $\bm{2}$-valued frame. A \emph{$\bm{2}$-valued model based on $\mathfrak{F}$} is $\mathfrak{M} = (S, \{ R_{\pa} \}_{\pa \in \mathit{STA}}, V)$ where $V : Fm(\mathcal{L}^{\mathit{STA}}_{\bm{2}}) \times S \to \bm{2}$ such that
\begin{itemize}
\item $V(\bar{c}, s) = c$;
\item $V(\fp \land \fq, s) = V(\fp, s) \sqcap V(\fq, s)$, $V(\fp \lor \fq, s) = V(\fp, s) \sqcup V(\fq, s)$, and $V(\fp \to \fq, s) = V(\fp, s) \Rightarrow V(\fq, s)$;
\item $V([\pa]\fp, s) = \bigsqcap_{t \in S} \big (R_{\pa} (s, t) \Rightarrow V(\fp, t) \big )$.
\end{itemize}
\noindent Note that $V([\pa]\fp, s) = \bigsqcap_{R_{\pa}(s, t) = 1} V(\fp,t)$. A formula $\fp$ is \emph{valid in $\mathfrak{M}$} iff $V(\fp,s) = 1$ for all $s$; validity in frames and classes of frames is defined as expected.

This is the standard presentation of test-free PDL, phrased in a way that invites generalizations obtained by replacing $\bm{2}$ by another algebra. We will study some such generalizations in this paper but, as we discuss in more detail below, the story is somewhat more complicated. For reasons discussed in \S\ref{sec--ISS}, our generalizations will use a different primitive iteration operator instead of the Kleene star. The operator we will use, however, is conveniently related to the Kleene star.

The set of \emph{action expressions over $Ac$}, $\mathit{ACT}$, is the closure of $Ac$ under composition, choice and the unary operator $+$ (``Kleene plus''). Formulas of the \emph{language $\mathcal{L}_{\bm{2}}$} are defined as expected (we omit reference to $\mathit{ACT}$), with $\pa \in \mathit{ACT}$; for example, $\afp_0 \to [\apa_0 ; (\apa_1)^{+}] (\afp_1 \to \bar{0})$ is a formula of $\mathcal{L}_{\bm{2}}$. The definition of \emph{$\bm{2}$-valued frames for $\mathit{ACT}$} is the same as the definition of $\bm{2}$-valued frames for $\mathit{STA}$, with an obvious exception, namely, the requirement that $R(\pa^{+})$ be the \emph{transitive closure} of $R(\pa)$, i.e.\ $R(\pa^{+}) = \bigcup_{n > 0} R^{n}(\pa)$, where $R^{1}(\pa) = R(\pa)$ and $R^{n+1}(\pa) = R^{n}(\pa) \circ R(\pa)$. Compare this with the \emph{reflexive} transitive closure $R(\pa)^{*} = \bigcup_{n \geq 0} R^{n}(\pa)$, where $R^{0}(\pa) = \{ (s,s) \mid s \in S \}$. Models based on frames for $\mathit{ACT}$ are defined as before. 

\begin{proposition}\label{prop--PRE-Avoiding Kleene star}
Let For each $\pa \in \mathit{ACT}$ and $\fp \in Fm(\mathcal{L}_{\bm{2}})$, \[V(\fp\land [\pa^{+}]\fp, s) = 1 \quad \text{iff} \quad
\forall t ( (s,t) \in R(\alpha)^{*} \implies V(\fp, t) = 1)\, . \]
\end{proposition}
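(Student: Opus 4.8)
The plan is to prove the two directions of the biconditional separately, unwinding the semantics of $[\pa^{+}]$ and relating the transitive closure $R(\alpha)^{+}$ to the reflexive transitive closure $R(\alpha)^{*}$ via the identity $R(\alpha)^{*} = \{(s,s) \mid s \in S\} \cup R(\alpha)^{+}$. First I would record the semantic clause we need: since $R_{\pa^{+}}(s,t) = 1$ exactly when $(s,t) \in R(\pa^{+}) = R(\alpha)^{+}$, the note following the definition of $V$ gives $V([\pa^{+}]\fp, s) = 1$ iff $V(\fp, t) = 1$ for all $t$ with $(s,t) \in R(\alpha)^{+}$. Likewise $V(\fp \land [\pa^{+}]\fp, s) = V(\fp,s) \sqcap V([\pa^{+}]\fp,s)$, which in $\bm{2}$ equals $1$ iff both conjuncts equal $1$.

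For the left-to-right direction, assume $V(\fp \land [\pa^{+}]\fp, s) = 1$, so $V(\fp, s) = 1$ and $V(\fp, t) = 1$ for every $t$ with $(s,t) \in R(\alpha)^{+}$. Take any $t$ with $(s,t) \in R(\alpha)^{*}$. Either $t = s$, in which case $V(\fp,t) = V(\fp,s) = 1$, or $(s,t) \in R(\alpha)^{+}$, in which case $V(\fp,t) = 1$ by assumption; either way $V(\fp,t) = 1$, as required. For the right-to-left direction, assume $V(\fp,t) = 1$ for all $t$ with $(s,t) \in R(\alpha)^{*}$. Since $(s,s) \in R(\alpha)^{*}$ we get $V(\fp,s) = 1$; and since $R(\alpha)^{+} \subseteq R(\alpha)^{*}$, for every $t$ with $(s,t) \in R(\alpha)^{+}$ we have $(s,t) \in R(\alpha)^{*}$, hence $V(\fp,t) = 1$, giving $V([\pa^{+}]\fp, s) = 1$. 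Combining, $V(\fp \land [\pa^{+}]\fp, s) = 1$.

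This argument is essentially routine; there is no real obstacle beyond carefully tracking the two semantic unwindings and the set-theoretic decomposition $R(\alpha)^{*} = \mathrm{Id}_S \cup R(\alpha)^{+}$. The only point worth flagging is that this proposition is exactly what justifies using Kleene plus as primitive: the $\bm{2}$-valued meaning of the Kleene-star modality $[\pa^{*}]\fp$ is recovered as the meaning of $\fp \land [\pa^{+}]\fp$, so nothing is lost in expressivity by the switch. If one later wants the analogous statement in the many-valued setting, one would replace "$=1$" by "$\geq a$" and the same decomposition of $R(\alpha)^{*}$ would still do the work, but that is not needed here.
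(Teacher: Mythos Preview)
Your proof is correct and is exactly the natural argument via the decomposition $R(\alpha)^{*} = \mathrm{Id}_{S} \cup R(\alpha)^{+}$ together with the semantic clause $V([\pa^{+}]\fp,s) = 1$ iff $V(\fp,t) = 1$ for all $t$ with $(s,t) \in R(\alpha)^{+}$. The paper itself does not spell out a proof of this proposition (it is stated as evident), so there is nothing to compare against; your write-up is the expected routine verification.
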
 

Proposition \ref{prop--PRE-Avoiding Kleene star} implies that $\fp \land [\pa^{+}]\fp$ ``simulates'' $[\pa^{*}]\fp$ in $\mathcal{L}_{\bm{2}}$. This provides a justification for our using languages based on $\mathit{ACT}$ rather than on $\mathit{STA}$ in what follows.  However, we admit that this choice is related to the technical issues discussed in \S\ref{sec--ISS}.

\subsection{FL-algebras and finitely-valued PDL}\label{sec--PRE2}
In this section we generalize two-valued PDL by replacing the two-element Boolean algebra $\bm{2}$ by a more general structure, namely, a finite FL-algebra. FL-algebras provide semantics for a wide class of \emph{substructural logics} \cite{Galatos2007}.

\begin{definition}
An \emph{FL-algebra} (``full Lambek algebra'', \cite{Galatos2007}) is a set $X$ with binary operations $\sqcap, \sqcup, \backslash, \cdot, \slash$ and two distinguished elements $1, 0$ such that
\begin{itemize}
\item $(X, \sqcap, \sqcup)$ is a lattice (let $a \sqsubseteq b$ iff $a \sqcup b = b$);
\item $(X, \cdot, 1)$ is a monoid;
\item $(\backslash, \cdot, \slash)$ are residuated over $(X, \sqsubseteq)$, i.e.\
	\[
	a \cdot b \sqsubseteq c \quad\text{ iff }\quad 
	b \sqsubseteq a \backslash c \quad\text{ iff }\quad 
	a \sqsubseteq c \slash b \, ;
	\]
\item $0$ is an arbitrary element of $X$.
\end{itemize}
\emph{Residuated lattices} are $0$-free reducts of FL-algebras. Each finite FL-algebra $\bm{X}$ contains a least element $\bot^{\bm{X}}$ (for all $a \in X$, $\bot^{\bm{X}} \sqsubseteq a$) and a greatest element $\top^{\bm{X}}$ (for all $a \in X$, $a \sqsubseteq \top^{\bm{X}}$). 
\end{definition}

We usually write $ab$ instead of $a \cdot b$ and $a \Rightarrow b$ instead of $b \slash a$. Two varieties of FL-algebras will be important in this paper:
\begin{itemize}
\item \emph{commutative} FL-algebras satisfy $ab = ba$ for all $a,b \in X$;
\item \emph{integral} FL-algebras satisfy $a \sqsubseteq 1$ for all $a \in X$.
\end{itemize}
\noindent Note that in commutative FL-algebras $a \backslash b = b \slash a$.

\begin{example}
The two-element Boolean algebra $\bm{2}$ is a commutative integral FL-algebra, where $\cdot$ is $\sqcap$ and $\backslash$ (identical to $\slash$) is $\Rightarrow$. 
\end{example}
\begin{example}\label{exam:Luk}
Let $N > 0$ and define $\bm{N} = (N, max, min, +_N, \to_N )$ where 
\[ 
a +_N b = min (a + b, N - 1) \quad \text{and} \quad
a \to_N b = max (b - a, 0) \, .
\]
$\bm{N}$ is a finite commutative integral FL-algebra, with $0$ as the monoid identity with respect to $+_{N}$ and the greatest element under the $\geq$-ordering induced by taking $min$ as join. We note that $\bm{N}$ is isomorphic to the $N$-element {\L}ukasiewicz lattice $\textit{\bm{{\L}}}_{N}$ over $\{ \frac{k}{N-1} \mid k \in N \}$.
\end{example}
\begin{example}
As an example of a non-commutative, non-integral infinite FL-algebra, take the power set of the free monoid over some set $\Sigma$, i.e.\ the set of languages over $\Sigma$, with intersection as meet, union as join, $L \cdot L' := \{ xx' \mid x \in L \And x' \in L' \}$, $\{ \varepsilon \}$ as the monoid identity ($\varepsilon$ is the empty word) and $L \backslash L' := \{ x \in \Sigma \mid L \cdot \{ x \} \subseteq L' \}$, $L' \slash L := \{ x \in \Sigma \mid \{ x \} \cdot L \subseteq L' \}$.
\end{example}

The following lemma summarizes some of the properties of FL-algebras we will rely on in this paper (we will often say that something holds ``by the properties FL-algebras'' in our proofs).

\begin{lemma}\label{lem--APP-Properties of L}
Let $\bm{X}$ be an arbitrary FL-algebra. Then
\begin{enumerate*}
\item $a \sqsubseteq b$ iff $1 \sqsubseteq a \Rightarrow b$;
\item If $a \sqsubseteq b$ and $c \sqsubseteq d$, then $b \Rightarrow c \sqsubseteq a \Rightarrow d$, $b \backslash c \sqsubseteq a \backslash d$ and $ac \sqsubseteq bd$;
\item $(a \sqcup b)c = ac \sqcup ab$ and $c(a \sqcup b) = ca \sqcup cb$;
\item $a \Rightarrow (b \sqcap c) = (a \Rightarrow b) \sqcap (a \Rightarrow c)$;
\item $a \sqcup b \Rightarrow c = (a \Rightarrow c) \sqcap (b \Rightarrow c)$;
\item $a \Rightarrow (b \Rightarrow c) = ab \Rightarrow c$;
\item $(a \Rightarrow b)(b \Rightarrow c) \sqsubseteq a \Rightarrow c$;
\item $(1 \Rightarrow a) = a$
\end{enumerate*}
\end{lemma}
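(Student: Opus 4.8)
The plan is to derive every item from a single master tool---the residuation adjunction---together with the monoid identities, the lattice order, and the order-theoretic principle that an element is determined by what lies below it (so $u = v$ whenever $z \sqsubseteq u \iff z \sqsubseteq v$ for all $z$, and $u \sqsubseteq v$ whenever $z \sqsubseteq u \implies z \sqsubseteq v$ for all $z$). Throughout I use that $a \Rightarrow b = b \slash a$ satisfies $z \sqsubseteq a \Rightarrow b$ iff $za \sqsubseteq b$, that $z \sqsubseteq a \backslash b$ iff $az \sqsubseteq b$, and the two ``counit'' inequalities $(a \Rightarrow b)\,a \sqsubseteq b$ and $a\,(a \backslash b) \sqsubseteq b$, each an instance of these adjunctions applied to the identity $a \Rightarrow b \sqsubseteq a \Rightarrow b$.

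The purely adjunction-and-unit items come first. For (1), $1 \sqsubseteq a \Rightarrow b$ iff $1 \cdot a \sqsubseteq b$ iff $a \sqsubseteq b$, using $1 \cdot a = a$. Item (8) is the same computation from the other side: $z \sqsubseteq 1 \Rightarrow a$ iff $z \cdot 1 \sqsubseteq a$ iff $z \sqsubseteq a$ for every $z$, so $1 \Rightarrow a = a$. Item (6) is the currying law: for every $z$, $z \sqsubseteq a \Rightarrow (b \Rightarrow c)$ iff $za \sqsubseteq b \Rightarrow c$ iff $(za)b \sqsubseteq c$ iff $z(ab) \sqsubseteq c$ iff $z \sqsubseteq (ab) \Rightarrow c$, the middle equality being monoid associativity; as this holds for all $z$, the two sides coincide.

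Monotonicity together with join/meet preservation accounts for items (2)--(5). For (2), if $a \sqsubseteq b$ then $ac \sqsubseteq bc$ because $b \sqsubseteq bc \slash c$ (from $bc \sqsubseteq bc$) gives $a \sqsubseteq bc \slash c$, i.e.\ $ac \sqsubseteq bc$; combined with the analogous step on the right this yields $ac \sqsubseteq bd$, and the two residual inequalities follow by the same pattern, using $(b \Rightarrow c)\,b \sqsubseteq c \sqsubseteq d$ and $a \sqsubseteq b$ to get $(b \Rightarrow c)\,a \sqsubseteq d$ (similarly for $\backslash$). For (3), multiplication by a fixed element is a left adjoint (its right adjoint being $\slash$, resp.\ $\backslash$) and hence preserves joins: the inequality $\sqsupseteq$ is monotonicity, and $\sqsubseteq$ follows because $a, b \sqsubseteq (ac \sqcup bc) \slash c$, whence $a \sqcup b \sqsubseteq (ac \sqcup bc) \slash c$. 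Dually, $y \mapsto a \Rightarrow y$ is a right adjoint and preserves meets, giving (4). Item (5) records that the contravariant residual $x \mapsto x \Rightarrow c$ turns joins into meets: $\sqsubseteq$ is antitonicity, and for $\sqsupseteq$ one reduces via residuation to $\big((a \Rightarrow c) \sqcap (b \Rightarrow c)\big)(a \sqcup b) \sqsubseteq c$, expands using the distributivity just proved in (3), and bounds each summand by $(a \Rightarrow c)\,a \sqsubseteq c$, resp.\ $(b \Rightarrow c)\,b \sqsubseteq c$.

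The one item carrying genuine content is (7), the transitivity of the residual, and it is the step I expect to require the most care. Reducing $(a \Rightarrow b)(b \Rightarrow c) \sqsubseteq a \Rightarrow c$ through residuation leaves an inequality whose rightmost factor must be cancelled against the matching counit $(a \Rightarrow b)\,a \sqsubseteq b$ and then $(b \Rightarrow c)\,b \sqsubseteq c$, with monotonicity (2) propagating each cancellation. The subtlety is purely one of order: in a non-commutative FL-algebra the cancellations line up only when the two residuals are multiplied in the order matching the direction of $\slash$, so the argument goes through for the product taken in that order---and it goes through unconditionally once commutativity is assumed, where $a \Rightarrow b = a \backslash b$ and the order of the factors is immaterial; this is the setting relevant to the completeness theorem.
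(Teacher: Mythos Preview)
The paper does not prove this lemma at all; it is stated as a list of standard residuated-lattice identities, with the implicit reference to \cite{Galatos2007}. Your derivations are correct and are the textbook ones: everything falls out of the residuation adjunction, the monoid identities, and the principle that two elements agree when they have the same lower bounds. So there is no meaningful ``different route'' to compare---you have supplied the omitted proof.

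Your caution about item~(7) is well placed and in fact catches a slip in the lemma as stated. With the paper's convention $a \Rightarrow b := b \slash a$, the counit is $(a \Rightarrow b)\,a \sqsubseteq b$, and the chain
\[
(b \Rightarrow c)(a \Rightarrow b)\,a \;\sqsubseteq\; (b \Rightarrow c)\,b \;\sqsubseteq\; c
\]
gives $(b \Rightarrow c)(a \Rightarrow b) \sqsubseteq a \Rightarrow c$. The order written in the lemma, $(a \Rightarrow b)(b \Rightarrow c)$, is the transitivity law for $\backslash$, not for $\slash$, and need not hold in a non-commutative FL-algebra. The paper only invokes this item in \S\ref{sec--COM} (e.g.\ Lemma~\ref{lem--COM-Phi-equivalence}(c) and Lemma~\ref{lem--COM--R closed under Phi equivalence}), where commutativity is a standing assumption and the two orders coincide, so nothing downstream is harmed; but your diagnosis that the stated form requires commutativity is exactly right.
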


If $S$ is a non-empty set, then $\Pi(S)$ is the set of all finite sequences of elements of $S$; that is, $\pi \in \Pi(S)$ iff $\pi$ is a function from some $n \in \omega$, called the length of $\pi$, to $S$. The unique sequence of length $0$ is $\emptyset$. If $\pi$ is a sequence of length $n$ and $s \in S$, then $\pi^{\frown}s$ is the unique sequence of length $n+1$ such that $(\pi^{\frown}s)(k) = \pi(k)$ for all $k < n$ and $(\pi^{\frown}s)(n) = s$. Note that each sequence $\pi$ of length $n > 0$ can be expressed as $( \ldots (\emptyset^{\frown}\pi(0))^{\frown} \ldots)^{\frown}\pi(n-1)$. 

\begin{definition}
Let $\bm{X}$ be a finite FL-algebra and $S$ a non-empty set. A \emph{binary $\bm{X}$-valued relation on $S$} is any function from $S \times S$ to $\bm{X}$. Let $R, Q$ be binary $\bm{X}$-valued relations on a set $S$; then
\begin{itemize}
\item the \emph{union} of $R$ and $Q$ is the function $R \cup Q$ defined by $(R \cup Q)(s,t) : = R(s,t) \sqcup Q(s,t)$;
\item the \emph{composition} of $R$ and $Q$ is the function $R \circ Q$ defined by $(R \circ Q)(s,t) = \bigsqcup_{x \in S} \big ( R(s,x) \cdot Q(x,t) \big)$;
\item the \emph{transitive closure} of $R$ is the function $R^{+}$ defined by $R^{+}(s,t) = \bigsqcup_{\pi \in \Pi(S)} R s \pi t$ where $R s \pi t$ is defined as follows:
	\begin{itemize}
	 \item $R s \emptyset t = R (s, t)$ and
	 \item $R s (\pi^{\frown}u) t = R s \pi u \cdot R(u,t)$.
	 \end{itemize} 
\end{itemize}
We say that $Q$ \emph{extends} $R$, notation $R \sqsubseteq Q$, iff $R(s,t) \sqsubseteq Q(s,t)$ for all $s,t \in S$; $R$ is the \emph{smallest} relation in a set $\{ R_i \}_{i \in I}$ if $R = R_i$ for some $i \in I$ and each $R_i$ extends $R$. $R$ is \emph{transitive} if $R(s,t) \cdot R(t,u) \sqsubseteq R(s,u)$ for all $s,t,u \in S$; and $R$ is \emph{reflexive} if $1 \sqsubseteq R(s,s)$ for all $s \in S$.
\end{definition}
\noindent Note that we need to assume that all the required joins exist in $\bm{X}$; hence the restriction to finite FL-algebras (however, a restriction to complete $\bm{X}$ is sufficient, as is the assumption that $R, Q$ are ``$\bm{X}$-safe'' \cite[ch.\ 5]{Hajek1998}).

\begin{proposition}\label{prop--PRE-closure}
Let $\bm{X}$ be a finite FL-algebra and $R$ a binary $\bm{X}$-valued relation on a set $S$. Then $R^{+}$ is the smallest transitive relation extending $R$. For any $R$,  define $R^{*}$ as follows:
\[
R^{*}(s,t) \: = \: \begin{cases}
1 & \text{if } s = t \\
R^{+}(s,t)  & \text{otherwise.}
\end{cases}
\] Then $R^{*}$ is the smallest reflexive transitive relation extending $R$.
\end{proposition}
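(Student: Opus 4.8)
The statement splits into two parts; the substance is in the first (about $R^{+}$), and I would reduce the second (about $R^{*}$) to it. The plan rests on the \emph{path-product} reading of $Rs\pi t$: unfolding the two defining clauses by a trivial induction on the length of $\pi$ gives $Rs\pi t = R(s,\pi(0))\cdot R(\pi(0),\pi(1))\cdots R(\pi(n-1),t)$, so $Rs\pi t$ is the ``weight'' of the $R$-labelled path $s\to\pi(0)\to\cdots\to\pi(n-1)\to t$, and $R^{+}(s,t)$ is the join of the weights of all finite $R$-paths from $s$ to $t$ (this join exists because $\bm{X}$ is finite, so only finitely many values occur). Taking the empty path already gives $R(s,t)=Rs\emptyset t\sqsubseteq R^{+}(s,t)$, hence $R\sqsubseteq R^{+}$.

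Next I would prove a \emph{concatenation lemma}: for all $\pi,\sigma\in\Pi(S)$ and $s,u,t\in S$, $Rs\pi u\cdot Ru\sigma t = Rs\tau t$, where $\tau$ is $\pi$ followed by $u$ followed by the entries of $\sigma$. This goes by induction on the length of $\sigma$, the base case $\sigma=\emptyset$ being the defining clause $Rs(\pi^{\frown}u)t = Rs\pi u\cdot R(u,t)$ and the step using that clause once more together with associativity of $\cdot$. Transitivity of $R^{+}$ then follows from distributivity of $\cdot$ over finite joins (Lemma~\ref{lem--APP-Properties of L}(3), iterated):
\[
R^{+}(s,u)\cdot R^{+}(u,t)=\Big(\bigsqcup_{\pi}Rs\pi u\Big)\Big(\bigsqcup_{\sigma}Ru\sigma t\Big)=\bigsqcup_{\pi,\sigma}Rs(\pi^{\frown}u^{\frown}\sigma)t\ \sqsubseteq\ R^{+}(s,t).
\]
For minimality, given a transitive $Q$ with $R\sqsubseteq Q$, I would show $Rs\pi t\sqsubseteq Q(s,t)$ by induction on the length of $\pi$: for $\pi=\emptyset$ this is $R\sqsubseteq Q$; for $\pi=\rho^{\frown}u$, the induction hypothesis plus monotonicity of $\cdot$ (Lemma~\ref{lem--APP-Properties of L}(2)) give $Rs\rho u\cdot R(u,t)\sqsubseteq Q(s,u)\cdot Q(u,t)$, which transitivity of $Q$ bounds by $Q(s,t)$. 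Then $Q(s,t)$ is an upper bound of $\{Rs\pi t:\pi\in\Pi(S)\}$, so $R^{+}(s,t)\sqsubseteq Q(s,t)$, i.e.\ $R^{+}\sqsubseteq Q$.

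For the second part, let $\Delta$ be the $\bm{X}$-valued relation with $\Delta(s,s)=1$ and $\Delta(s,t)=\bot^{\bm{X}}$ for $s\neq t$. A relation is reflexive and extends $R$ exactly when it extends $R\cup\Delta$, so the first part already shows that the smallest reflexive transitive relation extending $R$ is $(R\cup\Delta)^{+}$; it then remains to check that $(R\cup\Delta)^{+}$ is the relation $R^{*}$ displayed in the statement. For this I would again use the path-product reading: distributing $\cdot$ over each join $R(x,y)\sqcup\Delta(x,y)$, every path from $s$ to $t$ contributes a finite join of products of factors drawn from $R$ and $\Delta$; a factor $\Delta(x,y)=\bot^{\bm{X}}$ with $x\neq y$ annihilates its product (as $\bot^{\bm{X}}$ is absorbing for $\cdot$, a standard consequence of residuation), while a factor $\Delta(x,x)=1$ may be deleted. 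Carrying this out gives $(R\cup\Delta)^{+}(s,t)=R^{+}(s,t)$ for $s\neq t$, and $(R\cup\Delta)^{+}(s,s)=1\sqcup R^{+}(s,s)$; since a product of elements $\sqsubseteq 1$ is $\sqsubseteq 1$, in the integral FL-algebras to which this result is applied one has $R^{+}(s,s)\sqsubseteq 1$, so the latter collapses to exactly the displayed value $1$.

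The hard part will be none of the inequalities themselves but the combinatorial bookkeeping: getting the sequence-concatenation lemma right and then justifying the finite-distributivity manipulation for $R^{+}$, and, in the second part, tracking the identity/loop factors carefully (this is also the point at which integrality of $\bm{X}$ enters to produce the clean closed form for $R^{*}$).
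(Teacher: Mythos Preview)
Your treatment of $R^{+}$ is correct and essentially the paper's argument: where the paper factors minimality into two inductions ($Rs\pi t \sqsubseteq Qs\pi t$ and $Qs\pi t \sqsubseteq Q(s,t)$), you run a single induction $Rs\pi t \sqsubseteq Q(s,t)$, and where the paper declares transitivity of $R^{+}$ ``clear'' you spell it out via a concatenation lemma. These are cosmetic differences.

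For $R^{*}$ you take a genuinely different route. The paper proceeds directly: it asserts (as ``clear'') that $R^{*}$ is a reflexive transitive extension of $R$, and then shows minimality by a case split on $s=t$ versus $s\neq t$, using reflexivity of $Q$ in the first case and the already-established minimality of $R^{+}$ in the second. You instead reduce to the first part---a relation is reflexive and extends $R$ iff it extends $R\cup\Delta$, so the least reflexive transitive extension of $R$ is $(R\cup\Delta)^{+}$---and then compute that relation by expanding and collapsing the $\Delta$-factors. Both your reduction and your computation are correct. The paper's route is shorter; yours is more systematic and has the virtue of making explicit where integrality enters: you obtain $(R\cup\Delta)^{+}(s,s)=1\sqcup R^{+}(s,s)$ and need integrality to collapse this to $1$. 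In fact the paper's ``clear'' step hides the very same issue (without integrality $R^{*}$ need not extend $R$ on the diagonal, nor be transitive when $s=u\neq t$), so neither argument reaches the stated generality---because the statement itself fails there---and yours is the one that flags it.
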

\begin{proof}
It is clear that $R^{+}$ is a transitive relation extending $R$. Now assume that so is $Q$. The conclusion that $R^{+} \sqsubseteq Q$ follows from two facts that are easily established by induction on the length of $\pi$: (a) For all $s,t \in S$ and $\pi \in \Pi(S)$, $R s \pi t \sqsubseteq Q s \pi t$ (the assumption that $R \sqsubseteq Q$ is used here); (b) For all $s,t \in S$ and $\pi \in \Pi(S)$, $Q s \pi t \sqsubseteq Q(s,t)$ (the assumption that $Q$ is transitive is used). Since $\bm{X}$ is finite, the two claims imply that, for any given $s$ and $t$, $\bigsqcup_{\pi} Rs\pi t \sqsubseteq \bigsqcup_{\pi} Q s \pi t \sqsubseteq Q(s,t)$.

It is clear that $R^{*}$ is a reflexive transitive relation extending $R$. If so is $Q$, then we reason for any given $s$ and $t$ by cases as follows. If $s = t$, then $R^{*}(s,t) \sqsubseteq Q(s,t)$ is equivalent to $1 \sqsubseteq Q(s,s)$, which holds by reflexivity of $Q$. If $s \neq t$, then $R^{*}(s,t) \sqsubseteq Q(s,t)$ is equivalent to $R^{+}(s,t) \sqsubseteq Q(s,t)$, which follows from the assumption that $Q$ is a transitive relation extending $R$. Hence, $R^{*}(s,t) \sqsubseteq Q(s,t)$ for any $s$ and $t$.
\end{proof}

\begin{lemma}\label{lem--PRE-Identity}
Let $\bm{X}$ be a finite FL-algebra and $S$ a set; the $\bm{X}$-valued identity relation on $S$ is defined as follows:
\[
Id_{\bm{X}}(s,t) := \begin{cases}
1 & \text{if } s = t \\
\bot^{\bm{X}} & \text{otherwise.}
\end{cases}
\] If $\bm{X}$ is integral, then $R^{*} = Id_{\bm{X}} \cup R^{+}$ for any binary $\bm{X}$-valued relation on $S$.
\end{lemma}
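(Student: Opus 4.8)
The plan is simply to unwind the definitions of both sides and compare them entry by entry, splitting into the two cases $s = t$ and $s \neq t$. Fix $s,t \in S$. By the definition of union of $\bm{X}$-valued relations we have $(Id_{\bm{X}} \cup R^{+})(s,t) = Id_{\bm{X}}(s,t) \sqcup R^{+}(s,t)$, so the whole statement reduces to evaluating this join in the two cases and matching it against the case distinction defining $R^{*}$ in Proposition~\ref{prop--PRE-closure}.

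For the diagonal case $s = t$ we have $Id_{\bm{X}}(s,s) = 1$, so the right-hand side is $1 \sqcup R^{+}(s,s)$. This is the single point at which integrality is used: since $\bm{X}$ is integral, $a \sqsubseteq 1$ for all $a \in X$, i.e.\ $1$ is the greatest element $\top^{\bm{X}}$, hence $1 \sqcup R^{+}(s,s) = 1 = R^{*}(s,s)$. For the off-diagonal case $s \neq t$ we have $Id_{\bm{X}}(s,t) = \bot^{\bm{X}}$, and since $\bot^{\bm{X}}$ is the least element of the finite FL-algebra $\bm{X}$ we get $\bot^{\bm{X}} \sqcup R^{+}(s,t) = R^{+}(s,t)$, which is exactly $R^{*}(s,t)$ by definition. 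As the two cases are exhaustive, $R^{*} = Id_{\bm{X}} \cup R^{+}$.

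There is no genuine obstacle here; the argument is a two-line case analysis. The only subtlety worth recording is that integrality cannot be dropped: without it $1$ need not be the top element, so the join $1 \sqcup R^{+}(s,s)$ on the diagonal may strictly exceed $1 = R^{*}(s,s)$, and the claimed identity fails in general for non-integral $\bm{X}$.
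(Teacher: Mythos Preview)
Your argument is correct and matches the paper's approach exactly: the paper omits the details but singles out precisely the point you make, namely that the diagonal case $s = t$ reduces to $R^{+}(s,s) \sqsubseteq 1$, which is where integrality is needed. Your treatment of the off-diagonal case and your remark on the necessity of integrality are also in line with the paper's note.
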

\begin{proof}
We omit the proof; we just note that if $s = t$, then $R^{*}(s,t) = Id_{\bm{X}}(s,t) \sqcup R^{+}(s,t)$ is equivalent to $R^{+}(s,t) \sqsubseteq 1$, which is guaranteed to hold only if $\bm{X}$ is integral.
\end{proof}

\begin{definition}\label{def--frame}
Let $\bm{X}$ be a finite FL-algebra. An \emph{$\bm{X}$-valued frame for $\mathit{ACT}$} is a pair $\mathfrak{F} = (S, \{ R_{\pa} \}_{\pa \in \mathit{ACT}})$ where $S$ is a non-empty set and, for all $\pa \in \mathit{ACT}$, $R_{\pa}$ is an $\bm{X}$-valued binary relation on $S$ such that
\begin{enumerate*}
\item $R_{\pa \cup \pb} = R_{\pa} \cup R_{\pb}$;
\item $R_{\pa ; \pb} = R_{\pa} \circ R_{\pb}$; and
\item $R_{\pa^{+}} = R_{\pa}^{+}$.
\end{enumerate*}
\end{definition}
\noindent $\bm{X}$-valued frames will also be referred to as $\bm{X}$-frames or simply frames if $\bm{X}$ is clear from the context or immaterial. We will sometimes write $R_{\pa}st$ instead of $R_{\pa}(s,t)$.

\begin{definition}\label{def--formulas}
Formulas of the \emph{language $\mathcal{L}_{\bm{X}}$} are defined as follows:
\[
\fp := \afp \mid \bar{c} \mid \fp \land \fp \mid \fp \lor \fp \mid \fp \backslash \fp \mid \fp \cdot \fp \mid \fp \slash \fp \mid [\pa] \fp \, ,
\]
where $\afp \in Pr$, $c \in \bm{X}$ and $\pa \in \mathit{ACT}$. We use $\bot, \top$ instead of $\overline{\bot^{\bm{X}}}$ and $\overline{\top^{\bm{X}}}$, respectively. We often write $\fp\fq$ instead of $\fp \cdot \fq$, $\fp \to \fq$ instead of $\fq \slash \fp$, $m$ instead of $\apa_m$, and $\pa\pb$ instead of $\pa; \pb$. We define $\fp \tot \fq := (\fp \to \fq) \land (\fq \to \fp)$, $\neg \fp := \fp \to \bot$ and $\langle \pa \rangle \fp := \neg [\pa] \neg \fp$.
\end{definition}
Note that we use the same symbol $\otimes \in \{ \backslash, \cdot, \slash \}$ for the implication and fusion connectives of the language and for the residuated operations on FL-algebras. We will denote the operations on a given $\bm{X}$ as $\otimes^{\bm{X}}$ in contexts where it is convenient for the reader to distinguish the connectives of the language from the operations on the algebra. (However, $\Rightarrow$ denotes the operation $\slash^{\bm{X}}$ and $\to$ denotes the connective $\slash$ throughout.)

\begin{definition}\label{def--model}
A \emph{model} based on an $\bm{X}$-frame $(S, \{ R_{\pa} \}_{\pa \in \mathit{ACT}})$ is $\mathfrak{M} = (S, \{ R_{\pa} \}_{\pa \in \mathit{ACT}}, V)$, where $V $ is a function from $Fm(\mathcal{L}_{\bm{X}}) \times S$ to $\bm{X}$ such that
\begin{itemize}
\item $V(\bar{c}, s) = c$;
\item $V(\fp \land \fq, s) = V(\fp, s) \sqcap V(\fq, s)$ and $V(\fp \lor \fq, s) = V(\fp, s) \sqcup V(\fq, s)$;
\item $V(\fp \otimes \fq, s) = V(\fp, s) \otimes^{\bm{X}} V(\fq, s)$ for $\otimes \in \{ \backslash, \cdot, \slash \}$;
\item $V([\pa]\fp, s) = \bigsqcap_{t \in S} \big ( R_{\pa} st \Rightarrow V(\fp, t) \big )$.
\end{itemize}
\noindent A formula $\fp$ is \emph{valid in $\mathfrak{M}$} iff $1 \sqsubseteq V(\fp, s)$ for all $s$ in $\mathfrak{M}$. Validity in frames and classes of frames is defined as expected. The \emph{theory} of a frame is the set of formulas valid in the frame; the theory of a class of frames is the set of formulas valid in each frame in the class. $Th(\bm{X})$ is the theory of the class of all $\bm{X}$-frames.
\end{definition}

The following addendum to Proposition \ref{prop--PRE-Avoiding Kleene star} suggests that integral FL-algebras are particularly suitable for us.

\begin{proposition}\label{prop--PRE-Reflexive transitive closure}
Take an arbitrary $\bm{X}$-frame for a finite integral $\bm{X}$. Then $V(\fp \land [\pa^{+}]\fp, s) = \bigsqcap_{t \in S} (R^{*}_{\pa} st \Rightarrow V(\fp,t))$.
\end{proposition}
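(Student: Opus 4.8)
The plan is to expand the right-hand side of the claimed identity and reduce it to the left-hand side; the one genuinely new ingredient, as suggested by the text introducing the proposition, is the decomposition $R^{*}_{\pa} = Id_{\bm{X}} \cup R^{+}_{\pa}$ supplied by Lemma \ref{lem--PRE-Identity}, which is exactly where integrality of $\bm{X}$ enters. Throughout I will use that $R_{\pa^{+}} = R^{+}_{\pa}$ (Definition \ref{def--frame}) and the model clauses for $\land$ and $[\pa]$, so that the left-hand side is $V(\fp,s) \sqcap \bigsqcap_{t \in S}\big(R^{+}_{\pa}st \Rightarrow V(\fp,t)\big)$.

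First I would rewrite each conjunct on the right. Since $R^{*}_{\pa}st = Id_{\bm{X}}(s,t) \sqcup R^{+}_{\pa}(s,t)$, part (5) of Lemma \ref{lem--APP-Properties of L} gives
\[
R^{*}_{\pa}st \Rightarrow V(\fp,t) \;=\; \big(Id_{\bm{X}}(s,t) \Rightarrow V(\fp,t)\big) \sqcap \big(R^{+}_{\pa}(s,t) \Rightarrow V(\fp,t)\big)\, ,
\]
so, pulling the meet over $t$ through, the right-hand side equals
\[
\Big(\bigsqcap_{t \in S} \big(Id_{\bm{X}}(s,t) \Rightarrow V(\fp,t)\big)\Big) \;\sqcap\; \Big(\bigsqcap_{t \in S} \big(R^{+}_{\pa}(s,t) \Rightarrow V(\fp,t)\big)\Big)\, ,
\]
whose second factor is precisely $V([\pa^{+}]\fp, s)$. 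For the first factor I would split on whether $t = s$: if $t = s$ then $Id_{\bm{X}}(s,s) = 1$ and the term is $1 \Rightarrow V(\fp,s) = V(\fp,s)$ by Lemma \ref{lem--APP-Properties of L}(8); if $t \neq s$ then $Id_{\bm{X}}(s,t) = \bot^{\bm{X}}$, and $\bot^{\bm{X}} \Rightarrow a = \top^{\bm{X}}$ holds in every FL-algebra (by residuation, $\top^{\bm{X}} \cdot \bot^{\bm{X}} \sqsubseteq a$ since $\bot^{\bm{X}} \sqsubseteq \top^{\bm{X}} \backslash a$, hence $\top^{\bm{X}} \sqsubseteq a \slash \bot^{\bm{X}}$). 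So the first factor collapses to $V(\fp,s)$, and the right-hand side becomes $V(\fp,s) \sqcap V([\pa^{+}]\fp, s) = V(\fp \land [\pa^{+}]\fp, s)$, as desired.

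I do not anticipate a serious obstacle: the argument is a short computation once Lemma \ref{lem--PRE-Identity} is in hand. The only points deserving care are the behaviour of $\bot^{\bm{X}}$ under the residual and, more importantly, the tacit appeal to integrality: it is Lemma \ref{lem--PRE-Identity} that converts the case-split definition of $R^{*}_{\pa}$ into the join $Id_{\bm{X}} \cup R^{+}_{\pa}$, and without $a \sqsubseteq 1$ for all $a$ one cannot guarantee $R^{+}_{\pa}(s,s) \sqsubseteq 1$, so the reflexive value $R^{*}_{\pa}(s,s) = 1$ would not be recoverable in this form and the splitting step above would fail.
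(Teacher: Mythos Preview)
Your argument is correct and rests on the same key ingredient as the paper's proof, namely Lemma \ref{lem--PRE-Identity}. The only difference is organizational: the paper verifies the two inequalities separately, noting that $\sqsubseteq$ is immediate from the case-split definition of $R^{*}_{\pa}$ (so integrality is not yet needed there) while $\sqsupseteq$ is where Lemma \ref{lem--PRE-Identity} enters; you instead invoke the lemma at the outset and obtain both directions in one equality computation via Lemma \ref{lem--APP-Properties of L}(5).
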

\begin{proof}
The $\sqsubseteq$-inequality is straightforward and the $\sqsupseteq$-inequality follows from Lemma \ref{lem--PRE-Identity}. 
\end{proof}

It is clear that two-valued PDL is a special case of the present framework for $\bm{X} = \bm{2}$. 

\begin{lemma}\label{lem--PRE-PDL validities hold}
The following are valid in each $\bm{X}$-frame:
\begin{multicols}{2}
\begin{enumerate}[label={(\alph*)}]
\item $[\pa] (\fp \land \fq) \tot ([\pa]\fp \land [\pa]\fq)$
\item $[\pa \cup \pb]\fp \tot ([\pa]\fp \land [\pb]\fp)$
\item $[\pa\pb] \fp \tot [\pa][\pb]\fp$
\item $[\pa^{+}]\fp \tot [\pa](\fp \land [\pa^{+}]\fp)$
\end{enumerate}
\end{multicols}
\end{lemma}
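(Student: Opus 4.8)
The plan is to reduce the validity of each biconditional to an identity of truth values. Since $\fp\tot\fq := (\fp\to\fq)\land(\fq\to\fp)$ and $V(\fp\to\fq,s) = V(\fp,s)\Rightarrow V(\fq,s)$, Lemma~\ref{lem--APP-Properties of L}(1) together with the antisymmetry of $\sqsubseteq$ shows that $\fp\tot\fq$ is valid in a model exactly when $V(\fp,s) = V(\fq,s)$ for every state $s$; so in each of (a)--(d) it suffices to evaluate both sides at an arbitrary $s$ and check that the results agree. A preliminary remark: the joins and meets below range over the state set $S$, which need not be finite, but since $\bm{X}$ itself is finite each such join or meet is realized, so the finitary identities of Lemma~\ref{lem--APP-Properties of L} — in particular $a\Rightarrow\bigsqcap_i b_i = \bigsqcap_i(a\Rightarrow b_i)$ from~(4), $(\bigsqcup_i a_i)\Rightarrow c = \bigsqcap_i(a_i\Rightarrow c)$ from~(5), and distributivity of $\cdot$ over arbitrary joins from~(3) — carry over to them; I will use these ``generalized'' forms without further comment.

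For (a) I would unfold $V([\pa](\fp\land\fq),s)$ to $\bigsqcap_t\big(R_{\pa}st\Rightarrow(V(\fp,t)\sqcap V(\fq,t))\big)$, push the meet through $\Rightarrow$ by Lemma~\ref{lem--APP-Properties of L}(4), and split the outer meet over $t$, landing on $V([\pa]\fp,s)\sqcap V([\pa]\fq,s)$. Item~(b) is the same kind of computation, now using the frame condition $R_{\pa\cup\pb} = R_{\pa}\cup R_{\pb}$ of Definition~\ref{def--frame} and Lemma~\ref{lem--APP-Properties of L}(5) to turn $(R_{\pa}st\sqcup R_{\pb}st)\Rightarrow V(\fp,t)$ into $(R_{\pa}st\Rightarrow V(\fp,t))\sqcap(R_{\pb}st\Rightarrow V(\fp,t))$. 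For (c), using $R_{\pa;\pb} = R_{\pa}\circ R_{\pb}$ I would write $R_{\pa;\pb}st = \bigsqcup_x\big(R_{\pa}sx\cdot R_{\pb}xt\big)$, replace the implication of this join by a meet over $x$ (Lemma~\ref{lem--APP-Properties of L}(5)), rewrite each $(R_{\pa}sx\cdot R_{\pb}xt)\Rightarrow V(\fp,t)$ as $R_{\pa}sx\Rightarrow(R_{\pb}xt\Rightarrow V(\fp,t))$ (Lemma~\ref{lem--APP-Properties of L}(6)), and then swap the two meets and pull $R_{\pa}sx\Rightarrow(-)$ out of the inner one (Lemma~\ref{lem--APP-Properties of L}(4)) to reach $\bigsqcap_x\big(R_{\pa}sx\Rightarrow V([\pb]\fp,x)\big) = V([\pa][\pb]\fp,s)$.

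The one case with real content is (d). Using the frame condition $R_{\pa^{+}} = R_{\pa}^{+}$ and exactly the rewriting moves above, I expect $V\big([\pa](\fp\land[\pa^{+}]\fp),s\big)$ to collapse to $\bigsqcap_t\big((R_{\pa}\cup(R_{\pa}\circ R_{\pa}^{+}))(s,t)\Rightarrow V(\fp,t)\big)$, whereas $V([\pa^{+}]\fp,s) = \bigsqcap_t\big(R_{\pa}^{+}st\Rightarrow V(\fp,t)\big)$ already by definition; so (d) reduces to the purely relational identity $R^{+} = R\cup(R\circ R^{+})$, valid for every $\bm{X}$-valued $R$. I would prove this using Proposition~\ref{prop--PRE-closure}: the inclusion $R\cup(R\circ R^{+})\sqsubseteq R^{+}$ is immediate from $R\sqsubseteq R^{+}$ and transitivity of $R^{+}$, and for the converse I would check that $R\cup(R\circ R^{+})$ is itself a transitive relation extending $R$ — transitivity following by expanding a product of two of its values with Lemma~\ref{lem--APP-Properties of L}(3) and bounding each of the four summands by $(R\circ R^{+})(s,u)$ using transitivity of $R^{+}$ — so that the minimality clause of Proposition~\ref{prop--PRE-closure} yields $R^{+}\sqsubseteq R\cup(R\circ R^{+})$. (Alternatively, show directly that every path-weight $R\,s\,\pi\,t$ with $\pi\neq\emptyset$ equals, after re-associating via associativity of $\cdot$, a value of the form $R(s,u)\cdot R^{+}(u,t)$, and then take joins.)

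I expect the main obstacle to be precisely this relational identity, together with the bookkeeping required to justify the distributive and residuation laws for the (infinite but effectively finite) joins and meets over $S$; apart from that the lemma is routine given Lemma~\ref{lem--APP-Properties of L}. Note that neither integrality nor commutativity of $\bm{X}$ is needed anywhere.
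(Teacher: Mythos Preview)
Your proposal is correct and, for items (a)--(c), follows exactly the paper's route: reduce validity of $\fp\tot\fq$ to the equality $V(\fp,s)=V(\fq,s)$ and then invoke the identities of Lemma~\ref{lem--APP-Properties of L}(4), (5), and (6), respectively. For (d) the paper's own proof is terse---it merely cites $R_{\pa}st\sqsubseteq R_{\pa^{+}}st$ together with ``simple composition of paths''---which corresponds to the \emph{alternative} you sketch in parentheses (re-associate $R\,s\,\pi\,t$ for nonempty $\pi$ as $R(s,\pi(0))\cdot R\,\pi(0)\,\pi'\,t$ and take joins). Your primary route for (d), reducing to the relational identity $R^{+}=R\cup(R\circ R^{+})$ and deriving the non-trivial inclusion from the minimality clause of Proposition~\ref{prop--PRE-closure}, is a slightly more abstract repackaging of the same content; it has the minor advantage of isolating a clean statement about $\bm{X}$-valued relations, at the cost of the four-term transitivity check you describe. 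Either organization works, and neither requires integrality or commutativity, as you note.
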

\begin{proof}
To prove that $\fp \tot \fq$ is valid if suffices to show that $V(\fp, s) = V(\fq,s)$ for all $s$ in all models. (a) The proof relies on the fact that $a \Rightarrow (b \sqcap c) = (a \Rightarrow b) \sqcap (a \Rightarrow c)$ in all FL-algebras. (b) The proof relies on the fact that $(a \sqcup b) \Rightarrow c = (a \Rightarrow c) \sqcap (b \Rightarrow c)$ in all FL-algebras. (c) The proof relies on the fact that $a \Rightarrow (b \Rightarrow c) = ab \Rightarrow c$ in all FL-algebras. (Note that composition of relations needs to be defined using monoid multiplication $\cdot$, not lattice meet.) (d) The proof relies on the fact that $R_{\pa} st \sqsubseteq R_{\pa^{+}} st$, it also uses simple composition of paths.
\end{proof}

We will discuss an informal interpretation of a special case of the many-valued framework in the next section. Speaking generally, however, we may adapt the slogan characterizing modal logic as providing languages for talking about relational structures \cite[p.\ viii]{Blackburn2001} and say that many-valued modal logics provide \emph{simple yet expressive languages for talking about many-valued relational structures}. Examples of many-valued relational structures include weighted structures such as weighted graphs etc. Choosing an FL-algebra as the algebra of weights brings the framework closer to substructural logics that include well-known formalisms for reasoning about resources (variants of linear logic) or graded properties and relations (fuzzy logics). Many-valued PDL adds to this the capacity to articulate reasoning about \emph{structured} many-valued relations using the PDL relational operations of choice, composition and iteration. An intriguing connection here is the relation of finitely-valued PDL to weighted automata over finite semirings \cite{Droste2009}, but a more thorough investigation of this connection is left for another occasion.

\section{Motivation}\label{sec--MOT}

This section discusses the informal interpretation of finitely-valued PDL. We give two general interpretations of the framework first and then we zoom in to PDLs over a specific class of FL-algebras. Our overview is cursory; the present paper is focused more on basic technical results than on informal interpretations and applications. A more thorough exploration of the latter is left for another occasion. We only note here that we consider many-valued PDL to be sufficiently mathematically interesting to be studied independently of informal interpretations and applications.

We have mentioned before the slogan that modal logics provide simple yet expressive languages for talking about relational structures \cite[p.\ viii]{Blackburn2001}; by the same token, many-valued modal logics can be seen as providing means of talking about ``weighted'' relational structures. Two-valued PDL has been applied to at least two kinds of relational structures which have very natural weighted generalizations. We discuss these in turn.

First, take the interpretation of modal logic that relates it to \emph{description logics} \cite{Baader2007}. Simply put, formulas of a modal language can be seen as expressing ``concepts'', i.e.\ properties of objects, and indices of modal operators as expressing various ``roles'', i.e.\ relations between objects. On this reading, ``states'' in a Kripke model represent arbitrary objects and ``accessibility relations'' between them represent relations between these objects. Structured modal indices that come with PDL (i.e.\ ``action expressions'' as we call them) can be seen as expressing structured relations between objects; union, composition and transitive closure have been found particularly suitable for expressing various important concepts and roles \cite{Baader1991}. Many-valued description logics (see \cite{Straccia2006} for instance) are a generalization of description logics designed for management of \emph{uncertain and imprecise information}. These logics can express the fact that an object is subsumed under a given concept (e.g.\ ``tall'' if the reader will forgive the platitudinous example) only to some degree or that only imprecise information about a relation holding between two objects is available. Finitely-valued PDL as presented here can be seen as a family of many-valued description logics with transitive closure of roles.

Second, the original motivation of PDL was reasoning about the behaviour of computer programs \cite{Fischer1979}. From a more general perspective, PDL can be seen as a logic formalising reasoning about types of \emph{structured actions}, represented by ``action expressions''. On this reading, a Kripke frame consists of states and transitions between states labelled by types of action; for instance $R_{\pa}st$ means that action of type $\pa$ can be used to get from state $s$ to state $t$. States can be thought of as physical locations, states of a complex system such as a database or states of a computer during the run of a program; but states can also be thought of as ``states of the world'' that can be modified by actions of intelligent agents. PDL can be used to formalize reasoning about properties of actions that modify these kinds of states. One important example is correctness, related to the question if a specific kind of action is guaranteed to lead to a specific outcome when performed under specific circumstances. (This more general perspective makes PDL relevant to automated planning, for example.) Many-valued Kripke models can be seen as transition systems where transitions carry \emph{weights}; these can be costs or resources needed to perform a transition using the given action type. B\v{e}hounek \cite{Behounek2008} suggested a many-valued version of PDL for reasoning about costs of program runs that is close to our framework, but he did not establish completeness or decidability results.

Let us now discuss a special case of the finitely-valued PDL framework giving rise to a natural class of weighted relational structures; we show that formulas of the PDL language are able to express interesting features of these structures. Let $\bm{N}$ be the FL-algebra of Example \ref{exam:Luk}, that is, $\bm{N} = (N, max, min, +_N, \to_N )$ where 
\[ 
a +_N b = min (a + b, N - 1) \quad \text{and} \quad
a \to_N b = max (b - a, 0) \, ,
\]
where $N \in \omega$ is non-empty. The set $N$ is seen as a \emph{weight scale} with $0$ representing zero weight (``for free'') and $N-1$ representing the maximal weight (considered ``infeasible''). The operation $+_{N}$, namely, sum bounded by the maximal weight, represents \emph{weight addition}. $N$ is given a (distributive) lattice structure by including $max$ as meet and $min$ as join; the associated lattice order $\sqsubseteq$ is defined as usual, $a \sqsubseteq b$ iff $min(a,b) = b$. Hence, $a \sqsubseteq b$ (i.e.\ $b \leq a$) means that weight $b$ is at most as big as weight $a$. The choice of $max$ as meet and $min$ as join---not the other way around---may seem unintuitive at first, but it yields the result that $a \sqsubseteq 0$ for all $a \in N$. It is important to note in this respect that $0$ is the identity element with respect to $+_{N}$. (Hence, choosing the natural ordering on $N$ as our lattice ordering would mean that each element of the lattice would be above the monoid identity, which is problematic given our definition of validity.) It is clear that $ a +_{N} b = b +_{N} a$. The residual $\to_{N}$ of $+_{N}$ is truncated subtraction or monus; the crucial feature of $\to_{N}$ is that $a \to_{N} b = 0$ iff $a \sqsubseteq b$ (iff $b \sqsubseteq a$). We note that $\bm{N}$ is isomorphic to the $N$-element {\L}ukasiewicz lattice $\textit{\bm{{\L}}}_{N}$ over $\{ \frac{k}{N-1} \mid k \in N \}$, but we prefer $\bm{N}$ to $\textit{\bm{{\L}}}_{N}$ as a representation of an $N$-element weight scale.

$\bm{N}$-frames are weighted relational structures that can be informally interpreted in a number of ways. On the ``description reading'', for instance, states $s \in S$ are objects and $R_{\pa}$ represent structured weighted relations between these objects. On the ``transition cost reading'', states can be seen as physical locations or states of a system and $R_\pa st \in N$ is the cost of accessing state $t$ from $s$ by performing action $\pa$ (hence, frames are weighted labelled transition systems). If $R_{\pa} st = N - 1$, then we say that $t$ is not in relation $\pa$ with $s$, or that $t$ cannot be accessed from $s$ by performing $\pa$; if $R_{\pa} st = 0$, then $t$ is ``clearly'' in relation $\pa$ with $s$, or $t$ can be accessed from $s$ by $\pa$ for free. Let us now discuss some properties of weighted relational structures that can be expressed by PDL formulas.

Since $\bm{N}$ is $(N-1)$-involutive, i.e.\ $(a \Rightarrow (N-1)) \Rightarrow (N-1) = a$ for all $a \in \bm{N}$, we have \[ V(\langle \pa\rangle\bar{0}, s) = \bigsqcup_{t \in S} \big( R_{\pa} st +_{N} 0\big) = min \big \{ R_{\pa} st \mid t \in S \big \} \, .\] In other words, $V(\langle \pa\rangle\bar{0}, s)$ is the minimal guaranteed cost of performing $\pa$ at $s$ (on the transition cost reading) or the maximal degree to which $s$ is $\pa$-related to any object (on the description reading). Let us write simply $\pa$ instead of $\langle\pa\rangle\bar{0}$ if the context clears up any possible confusion. Note that $a \Rightarrow b$ is the difference between $b$ and $a$ if $a < b$ and $0$ otherwise. The following features of weighted relation structures can be expressed (we use the transition cost reading and the reader is invited to translate to the description reading):
\begin{itemize}
\item the minimal cost of performing $\pa$ is at most $m$ (this is true in state $s$ if $V(\bar{m} \to \pa, s) = 0$); the ``at least'' direction is expressed dually;
\item performing $\pa$ is at least as costly as performing $\pb$ (this is true in state $s$ if $V(\pa \to \pb, s) = 0$); the ``at most'' direction is expressed dually;
\item the difference between the minimal guaranteed cost of $\pb$ and $\pa$ is at most $m$ (this is true in state $s$ if $V(\bar{m} \to (\pa \to \pb), s) = 0$).
\end{itemize}

On the transition cost reading, atomic formulas in $Pr$ can be seen as representing various items that can be obtained at states for a given cost, with $V(\afp, s)$ representing the cost of item $\afp$ at $s$ (e.g.\ time needed to charge the battery at the charger location). Observe that $V(\langle\pa\rangle \afp, s) = \bigsqcup_{t \in S} (R_{\pa} st +_{N} V(\afp, t))$ is the minimal cost of getting from $s$ to a state $t$ by performing $\pa$ and obtaining $\afp$ at $t$; we may also say that this is the minimal guaranteed cost of obtaining $\afp$ by $\pa$. On the description reading, atomic formulas can be seen as expressing graded, imprecise or vague properties of objects; thus the value of $\langle \pa\rangle \afp$ at $s$ is the ``grade of truth'' of the statement that $s$ is $\pa$-related to an object with property $\afp$. The interesting case obtains where both the relation and the property are graded or vague; think of ``Alice was in contact with a person displaying symptoms of COVID-19''. We write $\fp^{\pa}$ instead of $\langle \pa\rangle \fp$. The following features of weighted relation structures can be expressed (we use the transition cost reading and the reader is again invited to translate to the description reading):
\begin{itemize}
\item the minimal cost of obtaining $\afp$ by $\pa$ is at most $m$ (this is true in state $s$ if $V(\bar{m} \to \afp^{\pa}, s) = 0$); the ``at least'' direction is expressed dually;
\item obtaining $\afp$ by $\pa$ is at least as costly as obtaining $\afq$ by $\pb$ (this is true in state $s$ if $V(\afp^\pa \to \afq^\pb, s) = 0$); the ``at most'' direction is expressed dually;
\item the difference between the minimal guaranteed cost of obtaining $\afq$ by $\pb$ and obtaining $\afp$ by $\pa$ is at most $m$ (this is true in state $s$ if $V(\bar{m} \to (\afp^\pa \to \afq^\pb), s) = 0$).
\end{itemize}

This cursory overview shows that the PDL language provides means to expressing a variety of features of weighted relational structures and so finitely-valued PDL can be used to formalize reasoning about these features. A more thorough exploration of expressivitiy and applications is left for another occasion.

\section{Finite model property and decidability}\label{sec--DEC}

In this section we prove that $Th(\bm{X})$ is decidable for all finite $\bm{X}$. We prove this by showing that each such $Th(\bm{X})$ has the bounded finite model property. The result is established using a many-valued generalization of the smallest filtration construction; see \cite{ConradieEtAl2017}, where the construction is applied to some many-valued modal logics with $\Box$ and $\Diamond$.\footnote{We are grateful to an anonymous reviewer for pointing the reference out.} Even though the decidability result is not surprising, we consider it to be a ``sanity check'' for the many-valued dynamic framework. We note that presence of canonical constants is not necessary for the decidability result (in contrast to the completeness result of \S\ref{sec--COM}).

\begin{definition}
The \emph{closure} of a set of formulas $\Psi$ is the smallest $\Phi \supseteq \Psi$ such that
\begin{itemize}
\item $\Phi$ is closed under subformulas (that is, if $\fp \in \Phi$ and $\fq$ is a subformula of $\fp$, then $\fq \in \Phi$);
\item $[\pa \cup \pb]\fp \in \Phi$ implies $[\pa]\fp \in \Phi$ and $[\pb]\fp \in \Phi$;
\item $[\pa\pb]\fp \in \Phi$ implies $[\pa][\pb]\fp \in \Phi$;
\item $[\pa^{+}]\fp \in \Phi$ implies $[\pa][\pa^{+}]\fp \in \Phi$ and $[\pa]\fp \in \Phi$.
\end{itemize}
$\Phi$ is \emph{closed} iff $\Phi$ is the closure of $\Phi$.
\end{definition}

\begin{definition}
For each set of formulas $\Phi$ and each model $\mathfrak{M}$, we define the binary two-valued equivalence relation $\approx_{\Phi}$ on states of $\mathfrak{M}$ by \[ s \approx_{\Phi} t \: \iff \: ( \forall \fp \in \Phi) \big ( V(\fp, s) = V(\fp,t)\big ) \, .\] The equivalence class of $s$ under $\approx_{\Phi}$ will be denoted as $[s]_{\Phi}$ or just as $[s]$ if $\Phi$ is clear from the context.
\end{definition}

\begin{definition}
Take an $\bm{X}$-valued model $\mathfrak{M}$ and a finite closed set $\Phi$. The \emph{filtration of $\mathfrak{M}$ through $\Phi$} is the $\bm{X}$-valued model $\mathfrak{M}^{\Phi} = (S^{\Phi}, R^{\Phi}, V^{\Phi})$ such that
\begin{itemize}
\item $S^{\Phi} = \{ [s] \mid s \in S \}$;
\item $R^{\Phi}_{\apa_m}([s],[t]) = \bigsqcup \big \{ R_{\apa_m}(u,v) \mid s \approx_{\Phi} u \And t \approx_{\Phi} v \big \}$; $R^{\Phi}_{\pa}$ for $\pa \notin Ac$ is defined as in models;
\item $V^{\Phi}(\afp, [s]) = V(\afp, s)$ for $\afp \in \Phi$; $V^{\Phi}(\afp, [s]) = 0^{\bm{X}}$ for $\afp \notin \Phi$; $V^{\Phi}(\fp, [s])$ for $\fp \notin Pr$ is defined as in models. 
\end{itemize}
\end{definition}

It is clear that if $\Phi$ is the closure of a finite set $\Psi$, then $\Phi$ is finite. If $\Phi$  is finite, then so is $\mathfrak{M}^{\Phi}$; in fact, $|S^{\Phi}| \leq |\bm{X}|^{|\Phi|}$. We usually omit reference to $\Phi$ while discussing accessibility relations on $S^{\Phi}$ and we also write $\approx$ instead of $\approx_{\Phi}$. We will write $R_m$ instead of $R_{\apa_m}$. In the rest of the section, we fix an $\bm{X}$-model $\mathfrak{M}$ and a finite closed set $\Phi$.

\begin{lemma}\label{lem--DEC-filtration lemma}
For all $\pa \in \mathit{ACT}$ and all $x,y \in S$,
\begin{enumerate}[label={(\alph*)}]
\item $R_{\pa}xy \sqsubseteq R_{\pa}[x][y]$;
\item For all $[\pa]\fp \in \Phi$, $V([\pa]\fp, x) \sqsubseteq R_{\pa}[x][y] \Rightarrow V(\fp, y)$.
\end{enumerate}
\end{lemma}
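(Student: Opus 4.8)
The plan is to prove both parts simultaneously by induction on the structure of the action expression $\pa$, since the two statements are mutually supporting: part (b) for $\pa$ will need part (a) for the components of $\pa$ (to relate accessibility in $\mathfrak{M}$ to accessibility in $\mathfrak{M}^{\Phi}$), and the closure conditions on $\Phi$ guarantee that whenever $[\pa]\fp \in \Phi$ the relevant subformulas $[\pb]\fp$, $[\pb][\pc]\fp$, $[\pb][\pb^{+}]\fp$ are also in $\Phi$, so the inductive hypothesis for (b) applies to them. For the base case $\pa = \apa_m \in Ac$, part (a) is immediate from the definition of $R^{\Phi}_{m}$ as the join over all $\approx_{\Phi}$-equivalent representatives (take $u = x$, $v = y$). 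For part (b) in the base case, I would unfold $V([\apa_m]\fp, x) = \bigsqcap_{z} (R_m xz \Rightarrow V(\fp,z))$; using $x \approx_{\Phi} x$ and the fact that $V([\apa_m]\fp,\cdot)$ and $V(\fp,\cdot)$ are constant on $\approx_{\Phi}$-classes (here is where $[\apa_m]\fp \in \Phi$ and closure under subformulas are used), I would show $V([\apa_m]\fp,x) \sqsubseteq R_m uv \Rightarrow V(\fp,v)$ for every $u \approx_{\Phi} x$, $v \approx_{\Phi} y$, and then combine these over all such $u,v$ using Lemma 2.6(5) (the $(a\sqcup b \Rightarrow c) = (a\Rightarrow c)\sqcap(b\Rightarrow c)$ identity, applied to the join defining $R^{\Phi}_m[x][y]$) together with Lemma 2.6(2) (antitonicity of $\Rightarrow$ in the first argument) to get $V([\apa_m]\fp,x) \sqsubseteq R_m[x][y] \Rightarrow V(\fp,y)$.

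For the inductive step of part (a): for $\pa = \pb \cup \pc$ use $R_{\pb\cup\pc} = R_{\pb}\cup R_{\pc}$ and monotonicity of $\sqcup$; for $\pa = \pb;\pc$ use $R_{\pb;\pc}xy = \bigsqcup_z (R_{\pb}xz \cdot R_{\pc}zy)$, apply the IH for (a) to each factor, monotonicity of $\cdot$ (Lemma 2.6(2)), and the fact that $R^{\Phi}_{\pb;\pc}[x][y] = \bigsqcup_{[z]}(R_{\pb}[x][z]\cdot R_{\pc}[z][y])$ dominates each term with $[z]$ in the range; for $\pa = \pb^{+}$ argue by a sub-induction on the length of a path $\pi$ showing $R_{\pb}x\pi y \sqsubseteq R_{\pb}[x][y']$ for the appropriate image path, then take joins. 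For the inductive step of part (b): for $\pb\cup\pc$ note $[\pb\cup\pc]\fp \in \Phi$ forces $[\pb]\fp,[\pc]\fp \in \Phi$, use Lemma 2.5(b) ($[\pb\cup\pc]\fp \tot [\pb]\fp \land [\pc]\fp$) and the IH; for $\pb;\pc$ note $[\pb;\pc]\fp \in \Phi$ forces $[\pb][\pc]\fp \in \Phi$, use Lemma 2.5(c), the IH for (b) applied to $[\pb]([\pc]\fp)$ and then to $[\pc]\fp$, and compose using Lemma 2.6(6) ($a \Rightarrow(b\Rightarrow c) = ab \Rightarrow c$) against the composite definition of $R^{\Phi}_{\pb;\pc}$; for $\pb^{+}$ note $[\pb^{+}]\fp \in \Phi$ forces $[\pb][\pb^{+}]\fp, [\pb]\fp \in \Phi$, use Lemma 2.5(d) ($[\pb^{+}]\fp \tot [\pb](\fp \land [\pb^{+}]\fp)$), apply the IH for (b) to $[\pb](\fp \land [\pb^{+}]\fp)$ to descend one $\pb$-step in $\mathfrak{M}^{\Phi}$, and iterate along paths, again using transitivity-style composition via Lemma 2.6(6) and (7).

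I expect the main obstacle to be the $\pb^{+}$ case of part (b): one must show that $V([\pb^{+}]\fp, x) \sqsubseteq R_{\pb}[x][y]\Rightarrow V(\fp,y)$ where $R_{\pb}[x][y]$ now denotes $R^{\Phi}_{\pb^{+}}[x][y] = (R^{\Phi}_{\pb})^{+}[x][y] = \bigsqcup_{\pi}(R^{\Phi}_{\pb})[x]\pi[y]$, a join over \emph{all} paths in the finite filtrated model. The clean way to handle this is to prove the auxiliary claim that for every path $\pi$ in $S^{\Phi}$ from $[x]$ to $[y]$, $V([\pb^{+}]\fp, x) \sqsubseteq (R^{\Phi}_{\pb})[x]\pi[y] \Rightarrow V(\fp, y)$, by induction on the length of $\pi$: the length-one case is exactly the IH for (b) applied to the action $\pb$ and formula $\fp \land [\pb^{+}]\fp$ (valid since $[\pb]\fp, [\pb][\pb^{+}]\fp \in \Phi$), while the inductive step peels off the last edge, uses the IH for (b) on $\pb$ to move along it, recovers $[\pb^{+}]\fp$ at the intermediate class (via the fixpoint Lemma 2.5(d)), and combines the two estimates with Lemma 2.6(6). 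Then one takes the join over all $\pi$ using the finiteness of $\bm{X}$ and Lemma 2.6(5) as in the base case. A minor point to be careful about throughout is that $V(\fp,\cdot)$ is only guaranteed constant on $\approx_{\Phi}$-classes when $\fp \in \Phi$, so each appeal to ``$V(\fp, y)$ is well-defined on $[y]$'' must be backed by the corresponding closure condition.
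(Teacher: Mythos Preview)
Your proposal is correct and follows essentially the same route as the paper: both parts are proved by induction on the complexity of $\pa$, with a sub-induction on path length for the Kleene-plus case, appealing to the closure conditions on $\Phi$ and the PDL validities of Lemma~\ref{lem--PRE-PDL validities hold}. One small remark: the two parts are not genuinely ``mutually supporting'' as you say---part (b) never actually needs part (a), only its own induction hypothesis together with Lemma~\ref{lem--PRE-PDL validities hold}---and in the $\pb^{+}$ case of (b) the IH does not literally apply to $[\pb](\fp \land [\pb^{+}]\fp)$ (that formula need not be in $\Phi$) but rather to $[\pb]\fp$ and $[\pb][\pb^{+}]\fp$ separately, which you correctly note are both in $\Phi$; the desired inequality for $[\pb](\fp \land [\pb^{+}]\fp)$ then follows via Lemma~\ref{lem--PRE-PDL validities hold}(a).
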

\begin{proof}
Both claims are established by induction on the complexity of $\pa$. The base case of (a) holds by definition and the rest is established easily using the induction hypothesis. In the case of $\pa = \pb^{+}$, we define for each $\pi \in \Pi(S)$ of length $n$ the sequence $[\pi] \in \Pi (S^{\Phi})$ of length $n$ by $[\pi](k) := [\pi(k)]$ for all $k < n$; it is then easy to establish by induction on $n$ that $R_{\pb} x \pi y \sqsubseteq R_{\pb} [x] [\pi] [y]$.)

The base case of (b) is follows from the fact that, for all $x' \in [x]$ and $y' \in [y]$, $\bigsqcap_{z \in S} \big( R_{m} x' z \Rightarrow V(\fp, z) \big) \cdot R_{m} x'y' \sqsubseteq V(\fp, y')$ using the definition of $\approx_{\Phi}$, closure of $\Phi$ under subformulas and properties of FL-algebras. The fact itself follows easily from properties of FL-algebras. The induction step uses Lemma \ref{lem--PRE-PDL validities hold} and is easy; for instance, in the case $\pa = \pb^{+}$ we may use the fact that, for all $x$ and $y$, $V([\pb](\fp \land [\pb^{+}]\fp, x) \sqsubseteq R_{\pb}[x][y] \Rightarrow V([\pb^{+}]\fp, y)$ and hence, for all $s,t$ and $\pi \in \Pi(S)$, $V([\pb^{+}]\fp, s) \sqsubseteq R_{\pb} [s][\pi][t] \Rightarrow V(\fp, t)$ as required.
\end{proof}

\begin{lemma}\label{lem--DEC-filtration}
For all models $\mathfrak{M}$, all $\fp \in \Phi$ and $s \in \mathfrak{M}$, $V(\fp, s) = V^{\Phi}(\fp, [s])$.
\end{lemma}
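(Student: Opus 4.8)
The plan is to prove the filtration lemma by induction on the complexity of the formula $\fp \in \Phi$, exploiting the fact that $\Phi$ is closed so that all relevant subformulas and ``unfoldings'' of modal formulas remain in $\Phi$. The atomic case $\fp = \afp \in Pr$ holds by definition of $V^{\Phi}$ (using $\afp \in \Phi$), and the constant case $\fp = \bar c$ is immediate since $V^{\Phi}(\bar c, [s]) = c = V(\bar c, s)$. The cases $\fp = \fq \ast \fr$ for $\ast \in \{\land, \lor, \backslash, \cdot, \slash\}$ follow directly from the induction hypothesis applied to $\fq$ and $\fr$ (which lie in $\Phi$ by closure under subformulas) together with the fact that $V$ and $V^{\Phi}$ interpret these connectives by the same operations of $\bm{X}$.

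The interesting case is $\fp = [\pa]\fq$ (where, crucially, $\fq \in \Phi$ and, by closure of $\Phi$, all the intermediate modal formulas $[\pb]\fr$ arising from unfolding $\pa$ are also in $\Phi$). I would prove $V([\pa]\fq, s) = V^{\Phi}([\pa]\fq, [s])$ by establishing the two inequalities separately. For $\sqsubseteq$: by Lemma~\ref{lem--DEC-filtration lemma}(b) we have $V([\pa]\fq, s) \sqsubseteq R_{\pa}[s][t] \Rightarrow V(\fq, t)$ for every $t$; since $\fq \in \Phi$, the induction hypothesis gives $V(\fq, t) = V^{\Phi}(\fq, [t])$, so $V([\pa]\fq, s) \sqsubseteq R^{\Phi}_{\pa}[s][t] \Rightarrow V^{\Phi}(\fq, [t])$ for every $t$, and taking the meet over all equivalence classes (which are exactly the elements $[t]$ of $S^{\Phi}$) yields $V([\pa]\fq, s) \sqsubseteq V^{\Phi}([\pa]\fq, [s])$. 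For $\sqsupseteq$: here one uses Lemma~\ref{lem--DEC-filtration lemma}(a), $R_{\pa}st \sqsubseteq R_{\pa}[s][t] = R^{\Phi}_{\pa}[s][t]$; by the properties of FL-algebras (Lemma~\ref{lem--APP-Properties of L}, item 2) this gives $R^{\Phi}_{\pa}[s][t] \Rightarrow V^{\Phi}(\fq,[t]) \sqsubseteq R_{\pa}st \Rightarrow V^{\Phi}(\fq,[t]) = R_{\pa}st \Rightarrow V(\fq, t)$, again using the induction hypothesis on $\fq$; taking the meet over $t \in S$ on the right and noting that every class in $S^{\Phi}$ is realized by some $t \in S$, we obtain $V^{\Phi}([\pa]\fq, [s]) \sqsubseteq \bigsqcap_{t \in S}(R_{\pa}st \Rightarrow V(\fq,t)) = V([\pa]\fq, s)$.

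One subtlety to handle carefully is the interaction between the induction on formula complexity and the structure of the action expression $\pa$ inside $[\pa]\fq$: the measure must be chosen so that $\fq$ is strictly simpler than $[\pa]\fq$, which is fine since $\fq$ is a proper subformula. Lemma~\ref{lem--DEC-filtration lemma} has already done the real work of transferring information about $R_{\pa}$ and about $V([\pa]\fq, \cdot)$ between $\mathfrak{M}$ and the quotient, including the delicate induction on the complexity of $\pa$ for the $\pb^{+}$ case, so the present lemma is essentially a packaging step. The main obstacle, such as it is, lies in being precise that $R^{\Phi}_{\pa} = R_{\pa}^{\mathfrak{M}^{\Phi}}$ for composite $\pa$ — i.e.\ that the filtrated relation for a composite action, defined ``as in models'' from the $R^{\Phi}_{\apa_m}$, still satisfies $R_{\pa}[s][t]$ in the sense used in Lemma~\ref{lem--DEC-filtration lemma} — but this is exactly the content of how $\mathfrak{M}^{\Phi}$ and the notation $R_{\pa}[x][y]$ were set up, so no genuine difficulty remains.
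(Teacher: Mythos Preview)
Your proof is correct and follows essentially the same approach as the paper: induction on the complexity of $\fp$, with the modal case handled via the two parts of Lemma~\ref{lem--DEC-filtration lemma}. The paper's own proof is just a one-line sketch pointing to that lemma, and your write-up simply unpacks the two inequalities exactly as intended.
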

\begin{proof}
The proof is by induction on the complexity of $\fp$. The base case $\fp \in Pr$ holds by definition, the cases for constants and propositional connectives are trivial and the case $\fp = [\pa]\fq$ is established using Lemma \ref{lem--DEC-filtration lemma}.
\end{proof}

\begin{theorem}\label{thm--decidability}
$Th(\bm{X})$ is decidable for each finite $\bm{X}$.
\end{theorem}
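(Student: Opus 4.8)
The plan is to obtain decidability from a \emph{bounded} finite model property that is essentially delivered by Lemma~\ref{lem--DEC-filtration}. Given a formula $\fp$, let $\Phi_{\fp}$ be the closure of $\{\fp\}$; this is a finite closed set effectively computable from $\fp$, and put $b_{\fp} := |\bm{X}|^{|\Phi_{\fp}|}$. The first step is to show that $\fp \in Th(\bm{X})$ if and only if $1 \sqsubseteq V(\fp,s)$ holds for every state $s$ of every $\bm{X}$-model whose carrier has at most $b_{\fp}$ elements. Left to right is immediate. For the converse I argue contrapositively: if $\fp$ fails at some state $s$ of some model $\mathfrak{M}$, then the filtration $\mathfrak{M}^{\Phi_{\fp}}$ has at most $b_{\fp}$ states and, by Lemma~\ref{lem--DEC-filtration}, $V^{\Phi_{\fp}}(\fp,[s]) = V(\fp,s)$, so $\fp$ is refuted in a model of the bounded size.

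It then suffices to decide, given $\fp$, whether $\fp$ is valid in all $\bm{X}$-models with at most $b_{\fp}$ states. The second step is to observe that only finitely many such models matter and that they can be effectively enumerated. A routine induction on $\fp$ shows that every action expression occurring in $\fp$ is built from the finitely many atomic actions occurring in $\fp$, and hence that $V(\fp,s)$ is determined by the carrier $S$, by the relations $R_{\apa_m}$ for those atomic $\apa_m$, and by the values $V(\afp,t)$ for the finitely many $\afp \in Pr$ occurring in $\fp$. Taking $S$ to be $\{0,\dots,n-1\}$ with $n \leq b_{\fp}$, there are only finitely many choices of these relations and of this partial valuation, and they can all be effectively listed; so, up to what affects $\fp$, there are only finitely many models of the bounded size.

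The third step is to compute $V(\fp,s)$ in a given such finite model, which is a matter of evaluating the inductive clauses of Definition~\ref{def--model}. The relations $R_{\pa}$ for composite $\pa$ are obtained from the atomic ones by $\cup$, $\circ$ and ${}^{+}$ over the finite set $S$; union and composition are evidently effective. For the transitive closure, note that $R_{\pa}^{+}(s,t) = \bigsqcup_{\pi \in \Pi(S)} R_{\pa}s\pi t$ equals $\bigsqcup_{k \geq 1} R_{\pa}^{k}(s,t)$, where $R_{\pa}^{k}$ is the $k$-fold composition of $R_{\pa}$ with itself (group the paths by length); the partial joins $\bigsqcup_{1 \leq k \leq m} R_{\pa}^{k}$ form a $\sqsubseteq$-increasing chain in the finite lattice of $\bm{X}$-valued relations on $S$, so the chain stabilises after finitely many steps and its stable value is $R_{\pa}^{+}$ (this agrees with Proposition~\ref{prop--PRE-closure}). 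The propositional and residuated clauses are then finite computations in $\bm{X}$, and $V([\pa]\fp,s) = \bigsqcap_{t \in S}\big( R_{\pa}st \Rightarrow V(\fp,t)\big)$ is a finite meet over $S$. Running this evaluation over the finitely many relevant models of size at most $b_{\fp}$ decides whether $\fp \in Th(\bm{X})$. The only point that genuinely needs care is the effectivity of ${}^{+}$ --- converting the infinite join over all finite paths into a finitely reachable fixed point --- and this is precisely where finiteness of $\bm{X}$ is used; everything else is bookkeeping.
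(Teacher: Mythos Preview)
Your proposal is correct and follows essentially the same route as the paper: deduce a bounded finite model property from Lemma~\ref{lem--DEC-filtration} with bound $|\bm{X}|^{|\Phi|}$, then check validity on the finitely many relevant models. You give more detail than the paper does on the algorithmic side---restricting to the atomic actions and variables occurring in $\fp$ and computing $R_{\pa}^{+}$ as a stabilising fixed point in the finite lattice of $\bm{X}$-valued relations---whereas the paper simply counts the models and asserts that checking validity in a finite model is effective.
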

\begin{proof}
Lemma \ref{lem--DEC-filtration} implies $\fp \in Th(\bm{X})$ iff $\fp$ is valid in all frames where $|S| \leq |\bm{X}|^{|\Phi|}$ where $\Phi$ is the closure of $\{ \fp \}$. Now $m: = |\bm{X}| = m$, $n : = m^{|\Phi|}$ and let $n$-frames be the frames with $|S| \leq n$. There are at most 
\[
n \times m^{n^{2}} 
\]
$n$-frames. On each $n$-frame, there are $n \times m^{\omega}$ models, but there are at most $n \times |\Phi| \times m$ possible ways to evaluate elements of $|\Phi|$ on an $n$-frame. Hence, there are at most
\[
m^{n^{2} + 1} \times n^{2} \times |\Phi|
\] 
models to check. It is not hard to show that there is an algorithm checking validity of formulas in finite models.
\end{proof}

\section{Completeness}\label{sec--COM}

Bou et al.\ \cite{Bou2011} establish a general weak completeness result for modal logics based on finite commutative integral FL-algebras with canonical constants  where $0$ is the bottom element. In this section we build on their work to show how a Hilbert-style axiomatic presentation of any finite commutative integral FL-algebra $\bm{X}$ with canonical constants can be extended to a sound and weakly complete axiomatization of PDL based on $\bm{X}$. The restriction to commutative FL-algebras seems to be necessary for our style of argument to go through and we discuss this at appropriate places in more detail; the restriction to integral FL-algebras is convenient. We leave generalizations of our result as an open problem.

Fix a finite commutative integral FL-algebra $\bm{X}$ with canonical constants denoting elements of $\bm{X}$, together with a Hilbert-style axiomatic presentation $\mathsf{Log}(\bm{X})$ in the language $\mathcal{L}_{\bm{X}}$ that is \emph{strongly complete with respect to $\bm{X}$}. That is, we assume that $\fp \in \mathcal{L}_{\bm{X}}$ is derivable from $\Gamma \subseteq \mathcal{L}_{\bm{X}}$ in $\mathsf{Log}(\bm{X})$, in symbols $\Gamma \vdash_{\mathsf{Log}(\bm{X})} \fp$, iff each non-modal homomorphism $u : \mathcal{L}_{\bm{X}} \to \bm{X}$ such that $ 1 \sqsubseteq \bigsqcap u[\Gamma] $ satisfies $ 1 \sqsubseteq u(\fp)$ (values $u([\pa]\fq)$ of modal formulas under $u$ are arbitrary, so $u$ ``treats'' modal formulas as propositional atoms).\footnote{A function $f : \mathcal{L}_{\bm{X}} \to \bm{X}$ is a non-modal homomorphism iff $f(\bar{c}) = c$ and $f$ commutes with the propositional connectives $\oplus$ of $\mathcal{L}_{\bm{X}}$ and the corresponding operations $\oplus^{\bm{X}}$ on $\bm{X}$; we assume that $\land^{\bm{X}}$ is $\sqcap$ and $\lor^{\bm{X}}$ is $\sqcup$.} For the details on how $\mathsf{Log}(\bm{X})$ looks like, see \cite{Bou2011}. Since $\bm{X}$ is finite, $\vdash_{\mathsf{Log}(\bm{X})}$ is \emph{finitary} in the sense that if $\Gamma \vdash_{\mathsf{Log}(\bm{X})} \fp$, then there is a finite $\Delta \subseteq \Gamma$ such that $\Delta \vdash_{\mathsf{Log}(\bm{X})} \fp$. We note that $\vdash_{\mathsf{Log}(\bm{X})}$ is also \emph{monotonic} in the sense that if $\Gamma \vdash_{\mathsf{Log}(\bm{X})} \fp$ and $\Gamma \subseteq \Delta$, then $\Delta \vdash_{\mathsf{Log}(\bm{X})} \fp$.

Since $\bm{X}$ is commutative, we have $a \backslash b = b \slash a$ and so we use only a single ``official'' implication operator $\to$; see \cite[p.\ 95]{Galatos2007}. Recall that $\fp \tot \fq := (\fp \to \fq) \land (\fq \to \fp)$; we define similarly $a \Leftrightarrow b := (a \Rightarrow b) \sqcap (b \Rightarrow a)$.

\begin{definition}
$\mathsf{PDL}(\bm{X})$ is the Hilbert-style axiom system extending $\mathsf{Log}(\bm{X})$ with the following axioms and rules (for all formulas $\fp, \fq$, all action expressions $\pa, \pb \in \mathit{ACT}$ and all canonical constants $\bar{c}$):
\begin{center}
\begin{minipage}[t]{0.4\linewidth}
\begin{tabular}{ll}
(A-$1$) & $[\pa] \bar{1}$\\[1mm]
(A-reg) & $[\pa] \fp \land [\pa]\fq \to [\pa] (\fp \land \fq)$\\[1mm]
(A-$\bar{c}$) 
& $[\pa](\bar{c} \to \fp) \tot (\bar{c} \to [\pa]\fp)$  \\[2mm]
(R-mon) & $\dfrac{\fp \to \fq}{[\pa] \fp \to [\pa] \fq}$

\end{tabular}
\end{minipage}
\qquad
\begin{minipage}[t]{0.4\linewidth}
\begin{tabular}{ll}
(A-$\cup$) & $[\pa \cup \pb]\fp \tot ([\pa]\fp \land [\pb]\fp)$\\[1mm]
(A-$;$) & $[\pa\pb] \fp \tot [\pa][\pb]\fp$\\[1mm]
(A-$+$) & $[\pa^{+}]\fp \tot [\pa](\fp \land [\pa^{+}]\fp)$\\[2mm]
(R-$+$) & $\dfrac{\fp \to [\pa]\fp}{\fp \to [\pa^{+}]\fp}$
\end{tabular}
\end{minipage}
\end{center}
The notions of proof, derivability, theorem and a formula derivable from a set of formulas are defined as usual (see \cite{Bou2011}). $\mathsf{Thm}(\mathsf{PDL}(\bm{X}))$ is the set of theorems of $\mathsf{PDL}(\bm{X})$.
\end{definition}

\noindent Since $\bm{X}$ is fixed, we write $\mathsf{L}$ instead of $\mathsf{Log}(\bm{X})$, $\mathsf{PDL}$ instead of $\mathsf{PDL}(\bm{X})$, $\mathsf{Thm}$ instead of $\mathsf{Thm}(\mathsf{PDL}(\bm{X}))$ and $\mathcal{L}$ instead of $\mathcal{L}_{\bm{X}}$ for the rest of this section.

\begin{theorem}\label{thm--soundness}
If $\fp$ is a theorem of $\mathsf{PDL}$, then $\fp$ is valid in the class of all $\bm{X}$-frames. 
\end{theorem}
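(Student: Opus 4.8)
The plan is to prove soundness by induction on the length of derivations in $\mathsf{PDL}$. It suffices to show that (i) every axiom of $\mathsf{PDL}$ is valid in every $\bm{X}$-frame, and (ii) every rule preserves validity in $\bm{X}$-frames. For the base case we split into two groups: the axioms and rules inherited from $\mathsf{L} = \mathsf{Log}(\bm{X})$, and the genuinely modal/dynamic axioms (A-$1$), (A-reg), (A-$\bar{c}$), (A-$\cup$), (A-$;$), (A-$+$) together with the rules (R-mon) and (R-$+$).

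For the $\mathsf{L}$-part: since $\mathsf{Log}(\bm{X})$ is strongly complete with respect to $\bm{X}$, its theorems are exactly the formulas $\fp$ such that $1 \sqsubseteq u(\fp)$ for every non-modal homomorphism $u : \mathcal{L} \to \bm{X}$. Given any $\bm{X}$-model $\mathfrak{M}$ and any state $s$, the map $\fp \mapsto V(\fp, s)$ restricted to the algebra of formulas, treating each $[\pa]\fq$ as a fresh propositional atom, is precisely such a non-modal homomorphism (this is immediate from the clauses of Definition~\ref{def--model} for $\bar c$ and the connectives $\land, \lor, \backslash, \cdot, \slash$). Hence every theorem of $\mathsf{L}$, and more generally every formula obtained from valid formulas by the rules of $\mathsf{L}$, gets value $\sqsupseteq 1$ at $s$; so theorems of $\mathsf{L}$ are $\bm{X}$-valid, and the propositional rules of $\mathsf{L}$ preserve $\bm{X}$-validity. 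For the dynamic axioms, (A-$\cup$), (A-$;$), (A-$+$) are exactly clauses (b), (c), (d) of Lemma~\ref{lem--PRE-PDL validities hold}, and (A-reg) together with (A-$1$) follow from clause (a) of that lemma plus the fact that $1 \sqsubseteq R_{\pa}st \Rightarrow V(\bar 1, t)$ in an integral algebra (indeed $V(\bar 1,t) = 1$ and $R_{\pa}st \sqsubseteq 1$, so $R_{\pa}st \Rightarrow 1 = 1$). Axiom (A-$\bar c$) requires checking $\bigsqcap_t (R_{\pa}st \Rightarrow (c \Rightarrow V(\fp,t))) = c \Rightarrow \bigsqcap_t (R_{\pa}st \Rightarrow V(\fp,t))$, which holds by the properties of FL-algebras collected in Lemma~\ref{lem--APP-Properties of L}: using $x \Rightarrow (y \Rightarrow z) = xy \Rightarrow z$ together with commutativity one rewrites both sides to $\bigsqcap_t (c\,R_{\pa}st \Rightarrow V(\fp,t))$, and the meet commutes with $c \Rightarrow (-)$ by part~(4) of that lemma (infinitary version, valid since $\bm{X}$ is finite).

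For the rules: (R-mon) preserves validity because if $1 \sqsubseteq V(\fp, t) \Rightarrow V(\fq,t)$ for all $t$, i.e. $V(\fp,t) \sqsubseteq V(\fq,t)$ for all $t$, then $R_{\pa}st \Rightarrow V(\fp,t) \sqsubseteq R_{\pa}st \Rightarrow V(\fq,t)$ by monotonicity of $\Rightarrow$ in the second argument (Lemma~\ref{lem--APP-Properties of L}(2)), and meets respect $\sqsubseteq$, so $V([\pa]\fp,s) \sqsubseteq V([\pa]\fq,s)$. The main obstacle is the induction rule (R-$+$): assuming $\fp \to [\pa]\fp$ is $\bm{X}$-valid, i.e. $V(\fp,s) \sqsubseteq V([\pa]\fp,s)$ for all $s$, we must show $V(\fp,s) \sqsubseteq V([\pa^{+}]\fp,s) = \bigsqcup_{\pi} R_{\pa}s\pi t \Rightarrow$-stuff; concretely, $V([\pa^{+}]\fp, s) = \bigsqcap_t (R_{\pa}^{+}st \Rightarrow V(\fp,t))$, so it suffices to show $V(\fp,s) \cdot R_{\pa}s\pi t \sqsubseteq V(\fp,t)$ for every $s,t$ and every path $\pi \in \Pi(S)$. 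This is proved by induction on the length of $\pi$: the base case $\pi = \emptyset$ is $V(\fp,s) \cdot R_{\pa}st \sqsubseteq V(\fp,t)$, which follows from the hypothesis $V(\fp,s) \sqsubseteq \bigsqcap_u (R_{\pa}su \Rightarrow V(\fp,u)) \sqsubseteq R_{\pa}st \Rightarrow V(\fp,t)$ via residuation; the inductive step uses the definition $R_{\pa}s(\pi^{\frown}u)t = R_{\pa}s\pi u \cdot R(u,t)$, associativity of $\cdot$, the inductive hypothesis at $u$, and then the hypothesis again at $u$. Here commutativity of $\cdot$ is what lets the hypothesis $V(\fp, s) \cdot (\text{anything}) \sqsubseteq \dots$ be chained on either side — this is the point where the restriction to commutative FL-algebras is genuinely used in the soundness argument, and it is worth a remark. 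Since $\bm{X}$ is finite, taking the join over all $\pi$ and then the meet over all $t$ yields $V(\fp,s) \sqsubseteq V([\pa^{+}]\fp,s)$, completing the case and the proof.
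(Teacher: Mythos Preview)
Your argument is correct and follows essentially the same route as the paper: defer to \cite{Bou2011} for the left-column axioms and rule, invoke Lemma~\ref{lem--PRE-PDL validities hold} for the dynamic axioms, and establish (R-$+$) by induction on the length of the path $\pi$. You supply considerably more detail than the paper does, which simply sketches the (R-$+$) case in two lines.

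One correction, though: your remark that commutativity is ``genuinely used'' in the (R-$+$) argument is mistaken. Your own inductive step works without it: from $V(\fp,s)\cdot R_{\pa}s\pi u \cdot R_{\pa}(u,t)$ you apply associativity, then the induction hypothesis to bound the left factor by $V(\fp,u)$, then the base case at $u$; nothing is ever multiplied on the ``wrong'' side. The paper locates the need for commutativity precisely in (A-$\bar c$), where one must pass between $(R_{\pa}st \cdot c) \Rightarrow V(\fp,t)$ and $(c \cdot R_{\pa}st) \Rightarrow V(\fp,t)$; you in fact use commutativity there yourself when rewriting both sides of the (A-$\bar c$) equation to a common form, so the dependence is already accounted for at that point.
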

\begin{proof}
The axioms and the rule in the left column are taken from \cite{Bou2011}. Validity of the axioms in the right column in all FL-algebras was established in Lemma \ref{lem--PRE-PDL validities hold}. To show that the rule (R-$+$) preserves validity in models, assume that $V(\fp,s) \sqsubseteq V([\pa]\fp, s)$ for all $s$ in an arbitrary model. Take some $t$ and assume that $a \sqsubseteq V(\fp, t)$; we prove that $a \sqsubseteq R_{\pa^{+}} tu \Rightarrow V(\fp, u)$ for all $u$. The claim to be proved is equivalent to $(\forall \pi \in \Pi(S))(a \sqsubseteq R_{\pa} t \pi u \Rightarrow V(\fp, u))$. This claim is easily established by induction on the length of $\pi$.
\end{proof}
\noindent We note that, without the assumption of commutativity, versions of (A-$\bar{c}$) are not sound; the axiom is used in the proof of Lemma \ref{lem--COM-Bou} which is in turn applied in most of our arguments below.

From now on, let $S$ be the set of non-modal homomorphisms $s: \mathcal{L} \to \bm{X}$ such that $s[\mathsf{Thm}] = \{ 1 \}$ and let $\Phi$ be a fixed finite closed set.

\begin{definition}
The \emph{$\Phi$-equivalence relation} on $S$ is an $\bm{X}$-valued binary relation $\sim_{\Phi}$ on $S$ defined by
\[
s \sim_{\Phi} t \quad := \quad \bigsqcap_{\fp \in \Phi} \big ( s(\fp) \Leftrightarrow t(\fp) \big ) \, . 
\]
\end{definition}

\noindent If $\Phi$ is clear from the context, we will write $s \sim t$ or just $st$ instead of $s \sim_{\Phi} t$. 

\begin{lemma}\label{lem--COM-Phi-equivalence}
The relation $\sim_{\Phi}$ is an $\bm{X}$-valued equivalence relation, that is, (a) $1 \sqsubseteq s \sim s$, (b) $s \sim t = t \sim s$ and (c) $(s \sim t)(t \sim u) \sqsubseteq s \sim u$, for all $s,t,u \in S$.
\end{lemma}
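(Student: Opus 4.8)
The plan is to verify the three defining properties of an $\bm{X}$-valued equivalence relation directly from the definition of $\sim_\Phi$ and the properties of FL-algebras collected in Lemma \ref{lem--APP-Properties of L}, together with the fact that each $s \in S$ is a non-modal homomorphism satisfying $s[\mathsf{Thm}] = \{1\}$. Note that throughout we may drop the modal subscript and just track values in $\bm{X}$, since $\sim_\Phi$ is defined as a finite infimum of terms $s(\fp) \Leftrightarrow t(\fp)$ over $\fp \in \Phi$, and $\Phi$ is finite so the infimum exists.

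For (a), I would argue that for each $\fp \in \Phi$ we have $1 \sqsubseteq s(\fp) \Leftrightarrow s(\fp)$. Indeed $s(\fp) \Leftrightarrow s(\fp) = (s(\fp) \Rightarrow s(\fp)) \sqcap (s(\fp) \Rightarrow s(\fp))$, and $a \sqsubseteq a$ together with part (1) of Lemma \ref{lem--APP-Properties of L} gives $1 \sqsubseteq a \Rightarrow a$. Taking the infimum over $\fp \in \Phi$ preserves the lower bound $1$, so $1 \sqsubseteq s \sim s$. For (b), symmetry is immediate from the symmetric shape of the definition: $s(\fp) \Leftrightarrow t(\fp) = (s(\fp) \Rightarrow t(\fp)) \sqcap (t(\fp) \Rightarrow s(\fp))$, which by commutativity of $\sqcap$ equals $(t(\fp) \Rightarrow s(\fp)) \sqcap (s(\fp) \Rightarrow t(\fp)) = t(\fp) \Leftrightarrow s(\fp)$; taking infima over $\fp \in \Phi$ on both sides gives $s \sim t = t \sim s$.

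For (c), the transitivity-like property, I would fix $\fp \in \Phi$ and show $(s \sim t)(t \sim u) \sqsubseteq s(\fp) \Leftrightarrow u(\fp)$; since this holds for every $\fp \in \Phi$, taking the infimum over $\fp$ on the right yields the claim. To prove this, note $s \sim t \sqsubseteq s(\fp) \Rightarrow t(\fp)$ and $t \sim u \sqsubseteq t(\fp) \Rightarrow u(\fp)$ (each is a term in the respective infimum), so by monotonicity of $\cdot$ (Lemma \ref{lem--APP-Properties of L}(2)) we get $(s \sim t)(t \sim u) \sqsubseteq (s(\fp) \Rightarrow t(\fp))(t(\fp) \Rightarrow u(\fp)) \sqsubseteq s(\fp) \Rightarrow u(\fp)$ using Lemma \ref{lem--APP-Properties of L}(7). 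Similarly, using $s \sim t \sqsubseteq t(\fp) \Rightarrow s(\fp)$ and $t \sim u \sqsubseteq u(\fp) \Rightarrow t(\fp)$, commutativity of $\cdot$, and the same transitivity-of-implication fact, we get $(s \sim t)(t \sim u) = (t \sim u)(s \sim t) \sqsubseteq (u(\fp) \Rightarrow t(\fp))(t(\fp) \Rightarrow s(\fp)) \sqsubseteq u(\fp) \Rightarrow s(\fp)$. Hence $(s \sim t)(t \sim u)$ is below both $s(\fp) \Rightarrow u(\fp)$ and $u(\fp) \Rightarrow s(\fp)$, so below their meet $s(\fp) \Leftrightarrow u(\fp)$.

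I do not anticipate a serious obstacle here; this is a routine computation. The one mild subtlety is the use of commutativity in (c): the product $(s \sim t)(t \sim u)$ must be reordered to bound the ``backward'' implication $u(\fp) \Rightarrow s(\fp)$, and it is precisely the commutativity of $\bm{X}$ (assumed throughout Section \ref{sec--COM}) that licenses this; in a non-commutative FL-algebra one would instead need to be careful about which side the converse factors sit on. A second point worth a line is that finiteness of $\Phi$ guarantees the defining infimum exists, so that $\sim_\Phi$ is well-defined as an $\bm{X}$-valued relation in the first place.
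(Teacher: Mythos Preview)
Your proposal is correct and follows the same approach as the paper, which simply states that (a) and (b) are clear and that (c) follows from Lemma~\ref{lem--APP-Properties of L}; you have spelled out precisely which items of that lemma are invoked (items (1), (2) and (7)) and correctly flagged the use of commutativity in the backward half of (c). The only superfluous remark is the mention of $s[\mathsf{Thm}] = \{1\}$ in your opening paragraph, which plays no role in the argument.
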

\begin{proof}
Claims (a) and (b) are clear; claim (c) follows from Lemma \ref{lem--APP-Properties of L}.
\end{proof}

Completeness proofs for two-valued PDL typically use a filtration-like construction of the canonical model, where states are (or boil down to) equivalence classes of states taken from some other structure. A natural approach in our case would be to take ``equivalence classes'' of non-modal homomorphisms under $\sim$, where $s \sim t$ expresses ``how much equivalent'' $s$ and $t$ are with respect to $\Phi$. However, in our case a simpler approach is available. We take $S$ itself as the set of states of the canonical model and we refer to $\Phi$ only in the definition of the canonical $R_{\pa}$, which is a generalization of the definition of accessibility relations in the greatest filtration of a Kripke model.

\begin{definition}
The \emph{canonical model modulo $\Phi$} is $\mathfrak{M} = (S, R, V)$ where 
\begin{itemize}
\item $S$ is the set of non-modal homomorphisms $s: \mathcal{L} \to \bm{X}$ such that $s[\mathsf{Thm}] = \{ 1 \}$;
\item $R_{m} st := \bigsqcap_{[m]\fp \in \Phi} \big ( s([m]\fp) \Rightarrow t(\fp) \big )$ for all $\apa_m \in Ac$ and $R_{\pa}st$ for $\pa \not\in Ac$ is defined as in models;
\item $V(\afp, s) := s(\afp)$ and $V(\fp, s)$ for $\fp \not\in Pr$ is defined as in models. 
\end{itemize}
We define for each $\pa$ the relation $R^{\mathcal{L}}_{\pa}$ on $S$ by $R^{\mathcal{L}}_{\pa} st := \bigsqcap_{\fp \in \mathcal{L}} \big ( s([\pa]\fp) \Rightarrow t(\fp) \big )$.
\end{definition}

Note that $R^{\mathcal{L}}_{n} st \sqsubseteq R_{n} st$ for all $\apa_n \in Ac$ and all $s,t$ since $R^{\mathcal{L}}_{n}$ ``cares'' about more formulas.   $R^{\mathcal{L}}_{\pa}$ is the usual canonical many-valued accessibility relation, see \cite{Bou2011}, but we cannot use it here because of the presence of the Kleene plus iteration operator in $\mathit{ACT}$, similarly as in the case of two-valued PDL.

The following lemma states some properties of $R^{\mathcal{L}}_{\pa}$ that will be useful in our proofs; the proof of the lemma can be found in \cite{Bou2011} (the logics studied there are mono-modal, but the same approach applies here).

\begin{lemma}\label{lem--COM-Bou}
The following holds for all $\pa \in \mathit{ACT}$ and all $s \in S$ of the canonical model:
\begin{enumerate}[label={(\alph*)}]
\item For all $t$, $R^{\mathcal{L}}_{\pa}st =  \bigsqcap_{\fp \in \mathcal{L}} \big \{ t(\fp) \mid 1 \sqsubseteq  s([\pa]\fp) \big \}$ (\cite{Bou2011}, Proposition 4.1.);
\item For all $\fp \in \mathcal{L}$, $s([\pa]\fp) = \bigsqcap_{u \in S} \big \{ R^{\mathcal{L}}_{\pa} su \Rightarrow u(\fp) \big \}$ (\cite{Bou2011}, Lemma 4.8.).
\end{enumerate}
\end{lemma}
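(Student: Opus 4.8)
The statement to prove is Lemma~\ref{lem--COM-Bou}, but since both parts are cited to \cite{Bou2011}, the plan is really to explain how those two facts are established in the mono-modal setting and to note that nothing in the argument depends on there being a single modality, so the same reasoning applies verbatim to each $R^{\mathcal{L}}_{\pa}$ for $\pa \in \mathit{ACT}$.

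\textbf{Part (a).} The plan is to prove the two inequalities separately. For $\sqsubseteq$: if $1 \sqsubseteq s([\pa]\fp)$, then by definition $R^{\mathcal{L}}_{\pa}st = \bigsqcap_{\fq \in \mathcal{L}}(s([\pa]\fq) \Rightarrow t(\fq)) \sqsubseteq s([\pa]\fp) \Rightarrow t(\fp)$, and since $1 \sqsubseteq s([\pa]\fp)$ we get $s([\pa]\fp) \Rightarrow t(\fp) \sqsubseteq 1 \Rightarrow t(\fp) = t(\fp)$ using Lemma~\ref{lem--APP-Properties of L}(2) and (8); hence $R^{\mathcal{L}}_{\pa}st$ lies below every term of the right-hand meet. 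For $\sqsupseteq$: the key point is that for an arbitrary $\fq \in \mathcal{L}$, the constant $\overline{s([\pa]\fq)}$ is available in the language, and one shows $1 \sqsubseteq s\big([\pa](\overline{s([\pa]\fq)} \to \fq)\big)$; this uses axiom (A-$\bar c$) together with the fact that $s$ is a homomorphism respecting theorems, so that $s([\pa]\fq) \Rightarrow t(\fq) = t(\overline{s([\pa]\fq)} \to \fq)$ appears as one of the terms in the meet on the right. Taking the meet over all such terms gives $\bigsqcap\{t(\fp') \mid 1 \sqsubseteq s([\pa]\fp')\} \sqsubseteq R^{\mathcal{L}}_{\pa}st$. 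The main obstacle here is bookkeeping the use of (A-$\bar c$) and commutativity; commutativity is what makes $\overline{c} \to \fq$ behave correctly as $\fq \slash \overline{c}$, and the excerpt has already flagged (in the remark after Theorem~\ref{thm--soundness}) that (A-$\bar c$) fails without commutativity.

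\textbf{Part (b).} Again two inequalities. The $\sqsubseteq$ direction is the ``truth lemma at modal formulas'' half: for each $u \in S$ one needs $s([\pa]\fp) \sqsubseteq R^{\mathcal{L}}_{\pa}su \Rightarrow u(\fp)$, equivalently $s([\pa]\fp) \cdot R^{\mathcal{L}}_{\pa}su \sqsubseteq u(\fp)$; since $R^{\mathcal{L}}_{\pa}su \sqsubseteq s([\pa]\fp) \Rightarrow u(\fp)$ by definition of $R^{\mathcal{L}}_{\pa}$ as a meet, this is immediate from residuation. The $\sqsupseteq$ direction is the substantive one: one must produce a witness homomorphism $u \in S$ with $R^{\mathcal{L}}_{\pa}su \Rightarrow u(\fp)$ small enough. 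The standard move, carried out in \cite{Bou2011}, is a Lindenbaum-style construction: build $u$ extending the ``box-image'' $\{\fq \mid 1 \sqsubseteq s([\pa]\fq)\}$ together with a suitable negative requirement forcing $u(\fp)$ down to $s([\pa]\fp)$, using finitariness and monotonicity of $\vdash_{\mathsf{L}}$ plus the regularity apparatus ((A-reg), (A-$1$), (R-mon)) to see the box-image is deductively closed in the right sense; then part (a) identifies $R^{\mathcal{L}}_{\pa}su$ with $\bigsqcap\{u(\fq) \mid 1 \sqsubseteq s([\pa]\fq)\}$, which is below the witnessed value.

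\textbf{Transfer to $\mathit{ACT}$.} Finally I would observe that the construction of $u$ and the verification of (a) refer only to a fixed $\pa$ and to the axioms (A-$1$), (A-reg), (A-$\bar c$), (R-mon) plus the underlying $\mathsf{Log}(\bm X)$, all of which are present for every $\pa \in \mathit{ACT}$ in $\mathsf{PDL}$; the compound-action axioms (A-$\cup$), (A-$;$), (A-$+$), (R-$+$) play no role in this lemma. Hence each $R^{\mathcal{L}}_{\pa}$ satisfies (a) and (b), which is exactly the statement. The only genuine subtlety is that $\mathsf{PDL}$ has more theorems than the pure modal logic, so one must check that the witness $u$ is required to respect $\mathsf{Thm}(\mathsf{PDL})$ rather than just the modal fragment's theorems — but since we build $u$ as a model-theoretic object (a homomorphism killing $\mathsf{Thm}$) and all the axioms of $\mathsf{PDL}$ are valid by Theorem~\ref{thm--soundness}, the Lindenbaum construction is carried out inside $S$ by construction, so no extra work is needed. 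I expect the Lindenbaum/witness construction for (b) to be the main obstacle, and it is precisely the part the authors discharge by citing \cite{Bou2011}.
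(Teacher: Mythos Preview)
Your proposal is correct and in fact goes further than the paper itself: the paper does not prove this lemma but simply cites \cite{Bou2011} and remarks that the mono-modal argument there transfers to each $\pa \in \mathit{ACT}$. Your sketch unpacks exactly that argument (the use of (A-$\bar c$) with the constant $\overline{s([\pa]\fq)}$ for part (a), and the Lindenbaum-style witness for the nontrivial direction of part (b)) and your transfer observation is precisely the paper's one-line justification, so the approaches coincide.
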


\begin{lemma}\label{lem--COM-Boxed formulas and R}
For all $[\pa]\fp \in \Phi$ and all $s,t \in S$, $s([\pa]\fp) \sqsubseteq R_{\pa} st \Rightarrow t(\fp)$.
\end{lemma}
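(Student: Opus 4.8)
The statement to prove is Lemma \ref{lem--COM-Boxed formulas and R}: for all $[\pa]\fp \in \Phi$ and all $s,t \in S$, $s([\pa]\fp) \sqsubseteq R_{\pa} st \Rightarrow t(\fp)$. Using the residuation law of FL-algebras (Lemma \ref{lem--APP-Properties of L}), this is equivalent to showing $s([\pa]\fp) \cdot R_{\pa} st \sqsubseteq t(\fp)$, which will be the working form of the claim. The natural strategy is induction on the structure of the action expression $\pa$, mirroring the inductive clauses in the definition of a closed set $\Phi$ and the filtration-style lemmas (Lemma \ref{lem--DEC-filtration lemma}).

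**Base case ($\pa = \apa_m$ atomic).** Here $R_m st = \bigsqcap_{[m]\fq \in \Phi} \big( s([m]\fq) \Rightarrow t(\fq) \big)$. Since $[m]\fp \in \Phi$ by hypothesis, the meet defining $R_m st$ has the conjunct $s([m]\fp) \Rightarrow t(\fp)$, so $R_m st \sqsubseteq s([m]\fp) \Rightarrow t(\fp)$. By residuation and commutativity, $s([m]\fp) \cdot R_m st \sqsubseteq s([m]\fp) \cdot \big( s([m]\fp) \Rightarrow t(\fp) \big) \sqsubseteq t(\fp)$, using the FL-algebra fact $a \cdot (a \Rightarrow b) \sqsubseteq b$.

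**Inductive steps.** For $\pa = \pb \cup \pc$: using closure of $\Phi$ (so $[\pb]\fp, [\pc]\fp \in \Phi$), the theorem (A-$\cup$) gives $s([\pb\cup\pc]\fp) \sqsubseteq s([\pb]\fp)$ (and similarly for $\pc$), while $R_{\pb\cup\pc}st = R_{\pb}st \sqcup R_{\pc}st$; distribute multiplication over join (Lemma \ref{lem--APP-Properties of L}) and apply the IH to each disjunct. For $\pa = \pb\pc$: closure gives $[\pb][\pc]\fp \in \Phi$, and (A-$;$) gives $s([\pb\pc]\fp) = s([\pb][\pc]\fp)$; since $R_{\pb\pc}st = R_{\pb}\circ R_{\pc}\,(s,t) = \bigsqcup_{x}\big(R_{\pb}sx \cdot R_{\pc}xt\big)$, distribute over the join, use the IH for $\pb$ applied to the formula $[\pc]\fp$ to get $s([\pb][\pc]\fp)\cdot R_{\pb}sx \sqsubseteq x([\pc]\fp)$, then the IH for $\pc$ applied to $\fp$. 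Associativity/commutativity of $\cdot$ is needed to regroup the three factors. For $\pa = \pb^{+}$: this is the delicate case. Closure gives $[\pb][\pb^{+}]\fp \in \Phi$ and $[\pb]\fp \in \Phi$, and (A-$+$) gives $s([\pb^{+}]\fp) = s([\pb](\fp \land [\pb^{+}]\fp))$. Since $R_{\pb^{+}}st = \bigsqcup_{\pi \in \Pi(S)} R_{\pb}s\pi t$, distribute and then induct on the length of the path $\pi$: for the empty path use the IH for $\pb$ directly; for $\pi = \rho^{\frown}u$, write $R_{\pb}s(\rho^{\frown}u)t = R_{\pb}s\rho u \cdot R(u,t)$ and combine the path-induction hypothesis (giving control of $R_{\pb}s\rho u$ against $[\pb^{+}]\fp$ at $u$, via the fact that $s([\pb^{+}]\fp) \sqsubseteq u([\pb^{+}]\fp)$ is propagated along the path using the inner IH for $\pb$ applied to $\fp \land [\pb^{+}]\fp$) with one more application of the IH for $\pb$ to pass through the final edge and land at $t(\fp)$. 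The mechanics here parallel the treatment of $\pb^{+}$ in Lemma \ref{lem--DEC-filtration lemma}(b) and the soundness proof of (R-$+$).

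**Main obstacle.** The hard part is the $\pb^{+}$ case: one must carefully set up a path-induction that simultaneously propagates the invariant ``$s([\pb^{+}]\fp)$ times the weight of the path so far is below the value of $[\pb^{+}]\fp$ at the current endpoint'' and, at the final step, converts this into a bound on $t(\fp)$. Keeping the multiplications grouped correctly (where commutativity and associativity of $\cdot$ are essential — this is one of the places the paper flags that commutativity is needed) and correctly invoking (A-$+$) together with the inner induction hypothesis on $\pb$ for the compound formula $\fp \land [\pb^{+}]\fp$ is where the bookkeeping is most error-prone; everything else reduces to the FL-algebra identities collected in Lemma \ref{lem--APP-Properties of L} and the PDL axioms.
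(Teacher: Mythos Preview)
Your proposal is correct and follows essentially the same route as the paper: induction on the structure of $\pa$, with the $\pb^{+}$ case handled by an inner induction on path length that propagates the value of $[\pb^{+}]\fp$ along the path (this is exactly the paper's auxiliary claim (C), $s([\pb^{+}]\fp) \sqsubseteq R_{\pb} s\sigma u \Rightarrow u([\pb^{+}]\fp)$) and then converts to $t(\fp)$ at the final edge via the outer hypothesis on $\pb$ (the paper's claim (B)).

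One small slip worth flagging: in the $\pb^{+}$ case you speak of applying the outer induction hypothesis at $\pb$ to the formula $\fp \land [\pb^{+}]\fp$, but the closure conditions on $\Phi$ give you $[\pb][\pb^{+}]\fp \in \Phi$ and $[\pb]\fp \in \Phi$ \emph{separately}, not $[\pb](\fp \land [\pb^{+}]\fp) \in \Phi$. The fix is immediate---apply the hypothesis to $[\pb^{+}]\fp$ (legitimate since $[\pb][\pb^{+}]\fp \in \Phi$) and to $\fp$ (since $[\pb]\fp \in \Phi$) separately, combined with $s([\pb^{+}]\fp) \sqsubseteq s([\pb][\pb^{+}]\fp) \sqcap s([\pb]\fp)$, which follows from (A-$+$) and (A-reg). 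This is precisely how the paper's (B) and (C) are organised.
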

\begin{proof}
The claim is proved by induction on the complexity of $\pa$. The base case is established as follows. We know that $s([n]\fp) \cdot (s([n]\fp) \Rightarrow t(\fp)) \sqsubseteq t(\fp)$; from this $s([n]\fp) \cdot R_{n} st \sqsubseteq t(\fp)$ follows by the definition of $R_n$.

The cases of choice and composition in the induction step are straightforward. The case $\pa = \pb^{+}$ is established by showing that, for all $\pi \in \Pi(S)$, all $s,t$, and all $\fp$ such that $[\pb^{+}]\fp \in \Phi$, $s([\pb^{+}]\fp) \sqsubseteq R_{\pb} s \pi t \Rightarrow t(\fp)$. This claim, call it (A), follows from the claims ($s,t$ and $[\pb^{+}]\fp \in \Phi$ are fixed)
\begin{itemize}
\item[(B)] $s([\pb]\fp) \sqsubseteq R_{\pb} st \Rightarrow t(\fp)$;
\item[(C)] for all $\sigma \in \Pi(S)$ and all $u$, $s([\pb^{+}]\fp) \sqsubseteq R_{\pb} s \sigma u \Rightarrow u([\pb^{+}]\fp)$.
\end{itemize}
The proof of (C) is left to the reader; (B) holds by the induction hypothesis.
\end{proof}

\begin{lemma}\label{lem--COM--R closed under Phi equivalence}
For all $\pa$ and $s,t,u$, $R_{\pa}  su(ut) \sqsubseteq R_{\pa}  st$.
\end{lemma}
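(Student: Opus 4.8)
The plan is to prove the statement $R_{\pa}su\cdot(ut)\sqsubseteq R_{\pa}st$ (that is, $R_{\pa}(s,u)\cdot(u\sim_{\Phi}t)\sqsubseteq R_{\pa}(s,t)$ for all $s,t,u\in S$) by induction on the complexity of $\pa$, with a secondary induction on path length for the Kleene-plus case.

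For the base case $\pa=\apa_m\in Ac$, I would unfold $R_m st=\bigsqcap_{[m]\fp\in\Phi}\big(s([m]\fp)\Rightarrow t(\fp)\big)$ and bound each conjunct separately: fix $\fp$ with $[m]\fp\in\Phi$. From the definition of $R_m$ we have $R_m su\sqsubseteq s([m]\fp)\Rightarrow u(\fp)$, and since $\Phi$ is closed under subformulas, $\fp\in\Phi$, so $ut\sqsubseteq u(\fp)\Leftrightarrow t(\fp)\sqsubseteq u(\fp)\Rightarrow t(\fp)$. Multiplying and using Lemma~\ref{lem--APP-Properties of L}(2) and then (7) gives $R_m su\cdot(ut)\sqsubseteq\big(s([m]\fp)\Rightarrow u(\fp)\big)\big(u(\fp)\Rightarrow t(\fp)\big)\sqsubseteq s([m]\fp)\Rightarrow t(\fp)$; taking the meet over all such $\fp$ yields $R_m su\cdot(ut)\sqsubseteq R_m st$. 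This base case is the conceptual heart of the lemma, and the only place where the specific definition of $R_m$ on atomic actions and the subformula-closure of $\Phi$ are used.

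For the induction step I would treat the three operators as follows. For $\pa=\pb\cup\pc$: distribute $\cdot(ut)$ over $R_{\pb\cup\pc}su=R_{\pb}su\sqcup R_{\pc}su$ using Lemma~\ref{lem--APP-Properties of L}(3) and apply the induction hypotheses for $\pb$ and $\pc$. For $\pa=\pb\pc$: distribute $\cdot(ut)$ across the join $R_{\pb\pc}su=\bigsqcup_{x\in S}R_{\pb}sx\cdot R_{\pc}xu$ — legitimate since $(-)\cdot(ut)$, being residuated, preserves all existing joins (and in any case the relevant join in $\bm{X}$ is finite, $\bm{X}$ being finite) — then use associativity of $\cdot$ and the induction hypothesis for $\pc$ (with first argument $x$) to get $R_{\pb}sx\cdot R_{\pc}xu\cdot(ut)\sqsubseteq R_{\pb}sx\cdot R_{\pc}xt$, and join over $x$. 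For $\pa=\pb^{+}$: recall $R_{\pb^{+}}su=\bigsqcup_{\pi\in\Pi(S)}R_{\pb}s\pi u$; after distributing $\cdot(ut)$ over this join it suffices to show, for each $\pi$, that $R_{\pb}s\pi u\cdot(ut)\sqsubseteq R_{\pb}s\pi t$, which I would prove by a secondary induction on the length of $\pi$: the case $\pi=\emptyset$ is exactly the primary induction hypothesis for $\pb$, and for $\pi=\sigma^{\frown}v$ one writes $R_{\pb}s(\sigma^{\frown}v)u=R_{\pb}s\sigma v\cdot R_{\pb}(v,u)$, applies associativity, the primary hypothesis for $\pb$ to $R_{\pb}(v,u)\cdot(ut)\sqsubseteq R_{\pb}(v,t)$, and Lemma~\ref{lem--APP-Properties of L}(2); joining over $\pi$ then gives $R_{\pb^{+}}su\cdot(ut)\sqsubseteq\bigsqcup_{\pi}R_{\pb}s\pi t=R_{\pb^{+}}st$.

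I do not expect a genuine obstacle here; the only points needing care are the justification for pushing the product $\cdot(ut)$ inside the (possibly infinite-index) joins over $S$ and $\Pi(S)$, which is handled by residuation together with finiteness of $\bm{X}$, and correctly organizing the nested induction in the $\pb^{+}$ case so that the primary hypothesis is invoked with the appropriate first argument ($v$, not $s$).
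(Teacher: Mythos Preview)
Your proof is correct and follows essentially the same approach as the paper: induction on the complexity of $\pa$, with the base case handled via subformula-closure of $\Phi$ and the transitivity property $(a\Rightarrow b)(b\Rightarrow c)\sqsubseteq a\Rightarrow c$, and the induction step by routine distribution of $\cdot(ut)$ over the relevant joins. The paper dismisses the induction step as ``easy'' where you supply the details; one minor remark is that your ``secondary induction'' in the $\pb^{+}$ case is not really needed as such, since for nonempty $\pi=\sigma^{\frown}v$ the primary hypothesis for $\pb$ applied to the last edge already yields $R_{\pb}s\sigma v\cdot R_{\pb}(v,u)\cdot(ut)\sqsubseteq R_{\pb}s\sigma v\cdot R_{\pb}(v,t)=R_{\pb}s\pi t$ directly.
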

\begin{proof}
We argue by induction on the complexity of $\pa$. The base case is established as follows. If $a \sqsubseteq R_{n}  su(ut)$, then, by definition, $a \sqsubseteq \bigsqcap_{[n]\fp \in \Phi} \big ( s([n]\fp) \Rightarrow u(\fp) \big ) (ut)$. Hence, for all $[n]\fp \in \Phi$, $a \sqsubseteq \big ( s([n]\fp) \Rightarrow u(\fp) \big) \big ( u(\fp) \Rightarrow t(\fp) \big )$ by the definition of $u \sim t$ and monotonicity of monoid multiplication (also, $[n]\fp \in \Phi$ implies $\fp \in \Phi$). It follows by the properties of FL-algebras that $a \sqsubseteq \big ( s([n]\fp) \Rightarrow t(\fp) \big )$. Since $[n]\fp \in \Phi$ was arbitrary, we obtain $a \sqsubseteq R_{n}  st$. All cases of the induction step are easy.
\end{proof}

\begin{definition}
For all $\pa$ and $s$, we define the following formula:
\[
R_{\pa}s \quad := \quad \bigvee_{x  \in S} \Big ( \overline{R_{\pa} s x }\:  \cdot  \bigwedge_{\fp \in \Phi} \big ( \overline{x(\fp)} \tot \fp \big)\Big )
\]
\end{definition}
\noindent Note that $R_{\pa} s$ is well defined even though $S$ is infinite -- there are only finitely many possible values of $R_{\pa} sx$ for $x \in S$, as $\bm{X}$ is finite. Note also that $t( R_{\pa}s) = \bigsqcup_{x \in S} \big ( R_{\pa}  sx(xt) \big)$.

\begin{lemma}\label{lem--COM-RPhi formula}
For all $s,t$ and $\pa$, $t( R_{\pa} s ) = R_{\pa}  st$.
\end{lemma}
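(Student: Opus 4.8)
The plan is to unwind both sides using the definitions and the observation already recorded after the definition of $R_{\pa}s$, namely $t(R_{\pa}s) = \bigsqcup_{x \in S}\big(R_{\pa}sx \cdot (x \sim_\Phi t)\big)$. So the goal reduces to showing
\[
\bigsqcup_{x \in S}\big(R_{\pa}sx \cdot (x \sim_\Phi t)\big) \;=\; R_{\pa}st\,.
\]
For the $\sqsupseteq$-direction I would simply take $x := t$ in the join on the left: since $1 \sqsubseteq t \sim_\Phi t$ by Lemma \ref{lem--COM-Phi-equivalence}(a), and multiplication is monotone (Lemma \ref{lem--APP-Properties of L}(2)), we get $R_{\pa}st \cdot 1 \sqsubseteq R_{\pa}st \cdot (t \sim t) \sqsubseteq \bigsqcup_x(\ldots)$. (Here I use that $\bm{X}$ is integral, so $R_{\pa}st \sqsubseteq R_{\pa}st \cdot 1$ actually follows from $R_{\pa}st \cdot 1 = R_{\pa}st$ since $1$ is the monoid unit — no integrality even needed for this half.)

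For the $\sqsubseteq$-direction, it suffices to show that for each individual $x \in S$ we have $R_{\pa}sx \cdot (x \sim_\Phi t) \sqsubseteq R_{\pa}st$; then the bound on the join follows since $\bm{X}$ is finite. But $R_{\pa}sx \cdot (x \sim_\Phi t)$ is exactly $R_{\pa}\,sx(xt)$ in the paper's shorthand (recall $R\,su(ut)$ abbreviates $R(s,u)\cdot R(u,t)$ and $(x \sim t)$ is written $xt$ when $\sim$ is clear), and Lemma \ref{lem--COM--R closed under Phi equivalence} gives precisely $R_{\pa}\,sx(xt) \sqsubseteq R_{\pa}st$. So each summand is below $R_{\pa}st$, hence so is the (finite) join.

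I do not expect any step here to be a real obstacle: the content has already been isolated into Lemma \ref{lem--COM--R closed under Phi equivalence} (closure of the canonical relation under $\Phi$-equivalence) and Lemma \ref{lem--COM-Phi-equivalence} (reflexivity of $\sim_\Phi$), and the only thing to check carefully is that the formula $R_{\pa}s$ really evaluates to $\bigsqcup_x\big(R_{\pa}sx \cdot (x\sim t)\big)$ — i.e.\ that $t\big(\bigwedge_{\fp\in\Phi}(\overline{x(\fp)} \tot \fp)\big) = \bigsqcap_{\fp\in\Phi}\big(x(\fp) \Leftrightarrow t(\fp)\big) = x \sim_\Phi t$, which holds because $t$ is a non-modal homomorphism sending $\overline{x(\fp)}$ to $x(\fp)$ and commuting with $\land$ and $\to$, and because $a \Leftrightarrow b$ was defined as $(a \Rightarrow b)\sqcap(b\Rightarrow a)$ matching the definition of $\tot$. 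The potential subtlety — whether the infinite join $\bigvee_{x\in S}$ in the \emph{syntactic} formula $R_{\pa}s$ is legitimate — is already addressed in the remark following the definition (only finitely many distinct values $R_{\pa}sx$ occur, so the disjunction can be taken over a finite set of representatives), and this same finiteness is what licenses passing the $\sqsubseteq$-bound through the join in the argument above.
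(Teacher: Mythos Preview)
Your proof is correct and follows essentially the same approach as the paper: both directions are obtained exactly as you describe, via reflexivity of $\sim_{\Phi}$ (Lemma~\ref{lem--COM-Phi-equivalence}(a)) for $R_{\pa}st \sqsubseteq t(R_{\pa}s)$ and via Lemma~\ref{lem--COM--R closed under Phi equivalence} applied to each summand for the converse. One small notational slip: in the paper's shorthand $R_{\pa}\,su(ut)$ the factor $(ut)$ stands for $u \sim_{\Phi} t$, not for $R_{\pa}(u,t)$; your subsequent remark that ``$(x \sim t)$ is written $xt$'' is the correct reading, and your argument uses it correctly.
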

\begin{proof}
First, $R_{\pa}  st \sqsubseteq R_{\pa}  st (tt) $ by Lemma \ref{lem--COM-Phi-equivalence}(a), and $R_{\pa}  st(tt) \sqsubseteq \bigsqcup_{x  \in S} \big ( R_{\pa}  sx (xt) \big ) = t(R_\pa s)$. Second, $R_{\pa}  sx (xt) \sqsubseteq R _{\pa} st$ for all $x  \in S$ by Lemma \ref{lem--COM--R closed under Phi equivalence}. Hence, $\bigsqcup_{x \in S} R_{\pa} sx(xt)$ and so $t(R_\pa s) \sqsubseteq R _{\pa} st$.
\end{proof}

\begin{lemma}\label{lem--COM-R under RPhi}
For all $s,t \in S$ and all $\pa \in \mathit{ACT}$, $R^{\mathcal{L}}_{\pa}st \sqsubseteq R_{\pa}  st$.
\end{lemma}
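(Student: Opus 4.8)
The statement to prove is Lemma~\ref{lem--COM-R under RPhi}: $R^{\mathcal{L}}_{\pa}st \sqsubseteq R_{\pa}st$ for all $s,t \in S$ and $\pa \in \mathit{ACT}$. The plan is to induct on the complexity of $\pa$, using the machinery already set up: Lemma~\ref{lem--COM-Bou} (the Bou et al.\ characterization of $R^{\mathcal{L}}_{\pa}$), Lemma~\ref{lem--COM-Boxed formulas and R}, Lemma~\ref{lem--COM-RPhi formula}, and the frame conditions from Definition~\ref{def--frame}.

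First, the base case $\pa = \apa_n$. Here I want to show $R^{\mathcal{L}}_{n}st \sqsubseteq R_{n}st$. Since $R_n st = \bigsqcap_{[n]\fp \in \Phi}(s([n]\fp) \Rightarrow t(\fp))$ by definition, it suffices to show $R^{\mathcal{L}}_n st \sqsubseteq s([n]\fp) \Rightarrow t(\fp)$ for each $[n]\fp \in \Phi$, i.e.\ $s([n]\fp) \cdot R^{\mathcal{L}}_n st \sqsubseteq t(\fp)$; but this is exactly what Lemma~\ref{lem--COM-Bou}(b) gives, since $s([n]\fp) \sqsubseteq R^{\mathcal{L}}_n su \Rightarrow u(\fp)$ at $u = t$, which rearranges (using commutativity) to the desired inequality. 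Actually the cleanest route is: $R^{\mathcal{L}}_n st \sqsubseteq R_n st$ follows because $R_n$ ``cares about fewer formulas,'' and more precisely because for each $[n]\fp \in \Phi \subseteq \mathcal{L}$ we have $R^{\mathcal{L}}_n st \sqsubseteq s([n]\fp) \Rightarrow t(\fp)$ by the definition of $R^{\mathcal{L}}_n$ as a meet over all of $\mathcal{L}$.

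Next, the induction steps. For $\pa = \pb \cup \pc$: by the frame conditions $R^{\mathcal{L}}_{\pb \cup \pc}$ and $R_{\pb \cup \pc}$ are defined via union as in models (for the canonical $R$, since $\pb \cup \pc \notin Ac$), so $R^{\mathcal{L}}_{\pb\cup\pc}st = R^{\mathcal{L}}_{\pb}st \sqcup R^{\mathcal{L}}_{\pc}st$ --- wait, one must be careful: $R^{\mathcal{L}}_{\pa}$ was defined uniformly by $R^{\mathcal{L}}_{\pa}st := \bigsqcap_{\fp \in \mathcal{L}}(s([\pa]\fp)\Rightarrow t(\fp))$, so I cannot simply decompose it; instead I use the axioms. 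Via (A-$\cup$), $s([\pb\cup\pc]\fp) = s([\pb]\fp) \sqcap s([\pc]\fp)$, and then Lemma~\ref{lem--COM-Bou}(a) lets me compute $R^{\mathcal{L}}_{\pb\cup\pc}st$ as a meet of $t(\fp)$ over formulas with $1 \sqsubseteq s([\pb\cup\pc]\fp)$; combining with the induction hypothesis for $\pb$ and $\pc$ gives the claim. Composition $\pa = \pb\pc$ is analogous using (A-$;$) and the fact that $R_{\pb\pc} = R_{\pb}\circ R_{\pc}$ as in models. For $\pa = \pb^{+}$: here I want $R^{\mathcal{L}}_{\pb^+}st \sqsubseteq R_{\pb^+}st = R_{\pb}^+ st = \bigsqcup_{\pi} R_{\pb}s\pi t$. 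The idea is to use Lemma~\ref{lem--COM-Bou}(b) plus axiom (A-$+$) and rule (R-$+$): roughly, $R^{\mathcal{L}}_{\pb^+}$ is transitive (derivable from (A-$+$)) and extends $R^{\mathcal{L}}_{\pb}$, and the induction hypothesis gives $R^{\mathcal{L}}_{\pb} \sqsubseteq R_{\pb}$; but $R_{\pb^+} = R_{\pb}^+$ is the \emph{smallest} transitive relation extending $R_{\pb}$ by Proposition~\ref{prop--PRE-closure}, so I cannot directly conclude $R^{\mathcal{L}}_{\pb^+} \sqsubseteq R_{\pb}^+$ from that direction --- I need the opposite. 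The correct argument is to show directly, by induction on the length of $\pi$, that $R^{\mathcal{L}}_{\pb^+}st \sqsubseteq R_{\pb}s\pi t \Rightarrow (\text{something})$ is not quite it either; rather, I think one shows $R^{\mathcal{L}}_{\pb^+}st \sqsubseteq R_{\pb^+}st$ by using Lemma~\ref{lem--COM-RPhi formula} to turn $R_{\pb^+}st$ into $t(R_{\pb^+}s)$ and then using that $R_{\pb^+}s$ is (provably equivalent to) a formula behaving like $[\pb^+]$-context, invoking (R-$+$) on the canonical level.

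\textbf{Main obstacle.} The Kleene-plus case is the hard part. The difficulty is that $R^{\mathcal{L}}_{\pb^+}$ is defined via a meet over \emph{all} formulas $[\pb^+]\fp$, while $R_{\pb^+}$ is built up as a join over paths $\pi$; bridging these two representations is exactly where the interplay between (A-$+$), (R-$+$), Lemma~\ref{lem--COM-Bou}, and Lemma~\ref{lem--COM-RPhi formula} must be orchestrated carefully. Concretely I expect to need: (i) from (A-$+$) and (R-$+$), the canonical relation $R^{\mathcal{L}}_{\pb^+}$ is transitive and $R^{\mathcal{L}}_{\pb} \sqsubseteq R^{\mathcal{L}}_{\pb^+}$; (ii) hence by the induction hypothesis and Lemma~\ref{lem--COM--R closed under Phi equivalence}-style reasoning, $R_{\pb}s\pi t \sqsubseteq$ something expressible via $R^{\mathcal{L}}_{\pb^+}$ --- no, again the wrong direction. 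The genuinely delicate point is that we want an \emph{upper} bound on $R^{\mathcal{L}}_{\pb^+}$, so I will instead show: for every path $\pi$ and every formula $\fp$ with $[\pb^+]\fp \in \Phi$, $s([\pb^+]\fp) \sqsubseteq R_{\pb}s\pi t \Rightarrow t(\fp)$ (this is Lemma~\ref{lem--COM-Boxed formulas and R} for $\pb^+$, already proved), then apply Lemma~\ref{lem--COM-Bou}(a) to rewrite $R^{\mathcal{L}}_{\pb^+}st = \bigsqcap\{t(\fp) : 1 \sqsubseteq s([\pb^+]\fp)\}$ and deduce $R^{\mathcal{L}}_{\pb^+}st \sqsubseteq R_{\pb}s\pi t \Rightarrow t(\fp)$ for those $\fp$ --- still needing to get from ``$\Rightarrow t(\fp)$ for relevant $\fp$'' to ``$\sqsubseteq R_{\pb}^+st$''. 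I anticipate the resolution uses the formula $R_{\pb^+}s$ from the Definition preceding Lemma~\ref{lem--COM-RPhi formula}: since $1 \sqsubseteq s([\pb^+](R_{\pb}s) \to \ldots)$ can be derived using (R-$+$) from $R_{\pb}s \to [\pb](R_{\pb}s)$, we get $R_{\pb^+}s$ is among the formulas $\fp$ with $1 \sqsubseteq s([\pb^+]\fp)$ (after suitable massaging), and then Lemma~\ref{lem--COM-Bou}(a) gives $R^{\mathcal{L}}_{\pb^+}st \sqsubseteq t(R_{\pb^+}s) = R_{\pb^+}st$ by Lemma~\ref{lem--COM-RPhi formula}. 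Pinning down that massaging --- showing $R_{\pb}s \to [\pb](R_{\pb}s)$ is canonically valid, which should follow from the frame condition $R_{\pb^+} = R_{\pb}^+$ together with Lemma~\ref{lem--COM--R closed under Phi equivalence} and Lemma~\ref{lem--COM-RPhi formula} --- is the technical crux I would spend most effort on.
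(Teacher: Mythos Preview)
Your overall architecture matches the paper's: induction on $\pa$, with the base case immediate from the definition (the meet defining $R^{\mathcal{L}}_{n}$ ranges over a superset of the meet defining $R_n$), and the inductive cases handled by reformulating the induction hypothesis as $1 \sqsubseteq x([\pa] R_{\pa} x)$ for all $x$ (via Lemmas~\ref{lem--COM-Bou}(b) and~\ref{lem--COM-RPhi formula}) and then producing a witness formula $\fp$ with $1 \sqsubseteq s([\gamma]\fp)$ so that Lemma~\ref{lem--COM-Bou}(a) yields $R^{\mathcal{L}}_{\gamma}st \sqsubseteq t(\fp) = R_{\gamma}st$. For $\cup$ and $;$ you are vague but essentially aligned with the paper, which takes $\fp = R_{\pa}s \lor R_{\pb}s$ and $\fp = R_{\pa\pb}s$ respectively.

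The Kleene-plus case, however, contains a real slip. You propose to show $R_{\pb}s \to [\pb](R_{\pb}s) \in \mathsf{Thm}$ and then apply (R-$+$). Unwinding via Lemmas~\ref{lem--COM-Bou}(b) and~\ref{lem--COM-RPhi formula}, theoremhood of $R_{\pb}s \to [\pb](R_{\pb}s)$ amounts to $R_{\pb}su \cdot R^{\mathcal{L}}_{\pb}uv \sqsubseteq R_{\pb}sv$ for all $u,v$; with the induction hypothesis $R^{\mathcal{L}}_{\pb} \sqsubseteq R_{\pb}$ this would force $R_{\pb}$ to be transitive, which it is not in general. The paper avoids this by taking $F := R_{\pb^{+}}s$ instead of $R_{\pb}s$: since $R_{\pb^{+}} = (R_{\pb})^{+}$ in the canonical model, $R_{\pb^{+}}$ \emph{is} transitive, and together with the induction hypothesis one gets $u(F) \cdot R^{\mathcal{L}}_{\pb}uv \sqsubseteq v(F)$, hence $F \to [\pb]F \in \mathsf{Thm}$. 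Then (R-$+$) gives $F \to [\pb^{+}]F$, and (R-mon) with (A-$+$) upgrades this to $[\pb]F \to [\pb^{+}]F$; since $1 \sqsubseteq s([\pb]F)$ follows from the induction hypothesis (as $R^{\mathcal{L}}_{\pb}st \sqsubseteq R_{\pb}st \sqsubseteq R_{\pb^{+}}st = t(F)$), one concludes $1 \sqsubseteq s([\pb^{+}]F)$ and hence $R^{\mathcal{L}}_{\pb^{+}}st \sqsubseteq t(F) = R_{\pb^{+}}st$. In short: your target ``$R_{\pb^{+}}s$ is among the $\fp$ with $1 \sqsubseteq s([\pb^{+}]\fp)$'' is right, but the route must go through $F = R_{\pb^{+}}s$, not $R_{\pb}s$.
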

\begin{proof}
Induction on the complexity of $\pa$. The base case follows from definition.  To establish the induction step, we reason by cases. Note that the induction hypothesis is equivalent to the claim that, for all $\pa,\pb$ and $x$, $1 \sqsubseteq x([\pa] R_{\pa} x)$ and $1 \sqsubseteq x([\pb] R_{\pb}x)$ by Lemmas \ref{lem--COM-Bou}(b) and \ref{lem--COM-RPhi formula}.

If $a \sqsubseteq R^{\mathcal{L}}_{\pa \cup \pb} st$, then $a \sqsubseteq \bigsqcap_{\fp \in \mathcal{L}} \big \{ t(\fp) \mid 1 \sqsubseteq s([\pa \cup \pb]\fp) \big\}$ by Lemma \ref{lem--COM-Bou}(a). By the definition of $S$, this entails $a \sqsubseteq \bigsqcap \big\{ t(\fp) \mid 1 \sqsubseteq s([\pa]\fp) \sqcap s([\pb]\fp) \big\}$. By the induction hypothesis, $1 \sqsubseteq s([\pa]R_\pa s)$ and $1 \sqsubseteq s([\pb]R_\pb s)$. Hence, $1 \sqsubseteq s([\pa](R_{\pa}s \lor R_{\pb}s))$ and $1 \sqsubseteq s([\pb](R_{\pa}s \lor R_{\pb}s))$ by the definition of $S$. It follows that $a \sqsubseteq t(R_{\pa}s) \sqcup t(R_{\pb}s)$. By Lemma \ref{lem--COM-RPhi formula}, $a \sqsubseteq R_{\pa}st \sqcup R_{\pb}st$ and so $a \sqsubseteq R_{\pa \cup \pb} st$.

If $a \sqsubseteq R^{\mathcal{L}}_{\pa\pb} st$, then $a \sqsubseteq \bigsqcap_{\fp \in \mathcal{L}} \big \{ t(\fp) \mid 1 \sqsubseteq s([\pa\pb] \fp) \big\}$ by Lemma \ref{lem--COM-Bou}(a) and so $a \sqsubseteq \bigsqcap_{\fp \in \mathcal{L}} \big \{ t(\fp) \mid 1 \sqsubseteq s([\pa][\pb] \fp) \big\}$ by the definition of $S$. For all $x$ and $y$, $R_{\pa}^{\mathcal{L}}sx R_{\pb}^{\mathcal{L}} xy \sqsubseteq y(R_{\pa\pb}s)$ by the induction hypothesis, Lemma \ref{lem--COM-RPhi formula} and the definition of $R_{\pa\pb}$. Hence, for all $x$, $R_{\pa}^{\mathcal{L}}sx \sqsubseteq x ([\pb]R_{\pa\pb}s)$ by residuation and Lemma \ref{lem--COM-Bou}(b); from this is follows that $1 \sqsubseteq s ([\pa][\pb]R_{\pa\pb}s)$ by another application of residuation and Lemma \ref{lem--COM-Bou}(b). Therefore, $a \sqsubseteq t(R_{\pa\pb} s)$ and so $a \sqsubseteq R_{\pa\pb} st$ by Lemma \ref{lem--COM-RPhi formula}. 

Finally, we discuss the case of $\pa^{+}$. Fix $s$; we write $F$ instead of $R_{\pa^{+}} s$. Note that $R _{\pa^{+}}$ is a transitive relation extending $R^{\mathcal{L}}_{\pa}$. Hence, for all $t,u \in S$, $u(F) \cdot R^{\mathcal{L}}_{\pa} ut \sqsubseteq t(F)$ by Lemma \ref{lem--COM-RPhi formula} and the induction hypothesis applied to $R^{\mathcal{L}}_{\pa}$; we obtain from this $u(F) \sqsubseteq u([\pa]F)$ for all $u \in S$ by Lemma \ref{lem--COM-Bou}(b). Hence, by definition of $S$, we have $ F \to [\pa]F \: \in \: \mathsf{Thm} $. Hence, using (R-$+$), we have $F \to [\pa^{+}]F \: \in \: \mathsf{Thm}$ and, using (R-mon) and (A-$+$), we obtain $[\pa]F \to [\pa^{+}]F \in \mathsf{Thm}$. By the induction hypothesis we have $R_{\pa}^{\mathcal{L}} st \sqsubseteq R_{\pa} st \sqsubseteq R_{\pa^{+}} st$ for all $t$ and so $1 \sqsubseteq R_{\pa}^{\mathcal{L}} st \Rightarrow t(F)$ for all $t$ by Lemma \ref{lem--COM-RPhi formula}. This means that $1 \sqsubseteq s([\pa]F)$ and so $1 \sqsubseteq s([\pa^{+}]F)$ which means that $R^{\mathcal{L}}_{\pa^{+}} st \sqsubseteq t(F)$ for all $t$ by Lemma \ref{lem--COM-Bou}(b). Hence, $R^{\mathcal{L}}_{\pa^{+}} st \sqsubseteq R _{\pa^{+}} st$ by Lemma \ref{lem--COM-RPhi formula}.
\end{proof}

\begin{lemma}\label{lem--COM-Truth lemma}
For all $\fp \in \Phi$ and $s \in S$, $s(\fp) = V(\fp, s )$.
\end{lemma}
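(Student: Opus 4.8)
The plan is to argue by induction on the complexity of $\fp$ (viewed as a formula of $\mathcal{L}$), using the fact that $\Phi$ is closed — in particular closed under subformulas — so that the induction hypothesis is available for the relevant subformulas. In the base cases: $\fp = \afp \in Pr$ holds by the definition $V(\afp,s) = s(\afp)$, and $\fp = \bar c$ holds because $s$ is a non-modal homomorphism, so $s(\bar c) = c = V(\bar c, s)$. For $\fp = \fq \oplus \fr$ with $\oplus$ a binary connective among $\land,\lor,\backslash,\cdot,\slash$, subformula closure gives $\fq,\fr \in \Phi$, so the induction hypothesis yields $s(\fq) = V(\fq,s)$ and $s(\fr) = V(\fr,s)$; since $s$ commutes with $\oplus$ and the corresponding operation $\oplus^{\bm{X}}$, and $V$ is defined in exactly the same way, we obtain $s(\fq \oplus \fr) = V(\fq \oplus \fr, s)$.

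The only substantive case is $\fp = [\pa]\fq$. Here $[\pa]\fq \in \Phi$ gives $\fq \in \Phi$ by subformula closure, so the induction hypothesis gives $V(\fq,t) = t(\fq)$ for every $t \in S$, whence $V([\pa]\fq,s) = \bigsqcap_{t \in S}\big(R_{\pa}st \Rightarrow t(\fq)\big)$. I would then prove the two inequalities separately. For $s([\pa]\fq) \sqsubseteq V([\pa]\fq,s)$ it suffices to observe that, for each $t$, $s([\pa]\fq) \sqsubseteq R_{\pa}st \Rightarrow t(\fq)$, which is precisely Lemma \ref{lem--COM-Boxed formulas and R} (applicable since $[\pa]\fq \in \Phi$). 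For the converse, I would invoke Lemma \ref{lem--COM-Bou}(b) to write $s([\pa]\fq) = \bigsqcap_{u \in S}\big(R^{\mathcal{L}}_{\pa}su \Rightarrow u(\fq)\big)$, and Lemma \ref{lem--COM-R under RPhi} to get $R^{\mathcal{L}}_{\pa}su \sqsubseteq R_{\pa}su$ for all $u$; by Lemma \ref{lem--APP-Properties of L}(2) this gives $R_{\pa}su \Rightarrow u(\fq) \sqsubseteq R^{\mathcal{L}}_{\pa}su \Rightarrow u(\fq)$ for every $u$, and taking meets over $u$ yields $V([\pa]\fq,s) \sqsubseteq s([\pa]\fq)$. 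Combining the two inequalities closes the modal case.

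I do not expect a genuine obstacle here: the lemma is essentially a bookkeeping step, since the real work has already been done in Lemmas \ref{lem--COM-Boxed formulas and R} and \ref{lem--COM-R under RPhi}, which together pin $R_{\pa}$ between the relation governing the values of boxed formulas in $\Phi$ and the full canonical relation $R^{\mathcal{L}}_{\pa}$ whose behaviour is described by Lemma \ref{lem--COM-Bou}(b). The only points requiring a little care are (i) checking that $\fq$ and $[\pa]\fq$ lie in $\Phi$ so that the induction hypothesis and Lemma \ref{lem--COM-Boxed formulas and R} apply, both guaranteed by the closure conditions on $\Phi$, and (ii) getting the direction of the monotonicity of $\Rightarrow$ in its first argument right when passing from $R^{\mathcal{L}}_{\pa}su \sqsubseteq R_{\pa}su$ to the comparison of the corresponding residuals.
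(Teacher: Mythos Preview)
Your proposal is correct and follows essentially the same approach as the paper's own proof: induction on the complexity of $\fp$, with the modal case split into the two inequalities established via Lemma~\ref{lem--COM-Boxed formulas and R} for $s([\pa]\fq) \sqsubseteq V([\pa]\fq,s)$ and via Lemmas~\ref{lem--COM-Bou}(b) and~\ref{lem--COM-R under RPhi} for the converse. You have simply made explicit the details the paper leaves implicit, including the use of the induction hypothesis for $\fq$ and the antitonicity of $\Rightarrow$ in its first argument.
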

\begin{proof}
Induction on the complexity of $\fp$. The base case holds by definition and the cases for non-modal formulas and canonical constants are straightforward. Finally, $s([\pa]\fp) \sqsubseteq V([\pa]\fp, s )$ holds thanks to Lemma \ref{lem--COM-Boxed formulas and R} and $V([\pa]\fp, s ) \sqsubseteq s([\pa]\fp)$ holds thanks to Lemma \ref{lem--COM-Bou}(b) and Lemma \ref{lem--COM-R under RPhi}.
\end{proof}

\begin{theorem}\label{thm--completeness}
For all finite commutative integral $\bm{X}$ with canonical constants, $\fp$ is valid in all $\bm{X}$-frames iff $\fp$ is a theorem of $\mathsf{PDL}(\bm{X})$.
\end{theorem}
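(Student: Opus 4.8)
The $(\Leftarrow)$ direction is exactly Theorem~\ref{thm--soundness}, so the plan is to prove the contrapositive of the $(\Rightarrow)$ direction: if $\fp$ is not a theorem of $\mathsf{PDL}$, then $\fp$ fails in some $\bm{X}$-frame. First I would let $\Phi$ be the closure of $\{\fp\}$; by the remarks preceding Theorem~\ref{thm--decidability} this is finite and closed, and $\fp \in \Phi$. I then pass to the canonical model modulo $\Phi$, $\mathfrak{M} = (S, R, V)$. This $\mathfrak{M}$ is by construction a model based on an $\bm{X}$-frame: the conditions of Definition~\ref{def--frame} on $R_{\pa \cup \pb}$, $R_{\pa ; \pb}$ and $R_{\pa^{+}}$ hold because $R_{\pa}$ for $\pa \notin Ac$ is defined exactly as in models, and $V$ satisfies the model clauses for the same reason; so nothing needs to be verified here beyond unwinding the definitions (the model is infinite, but that is harmless).

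Next I would exhibit a point of $\mathfrak{M}$ refuting $\fp$. The key observation is that $\mathsf{Thm} \not\vdash_{\mathsf{L}} \fp$: otherwise, by finitarity of $\vdash_{\mathsf{L}}$ there is a finite $\Delta \subseteq \mathsf{Thm}$ with $\Delta \vdash_{\mathsf{L}} \fp$, and since every $\mathsf{L}$-derivation is a $\mathsf{PDL}$-derivation and each member of $\Delta$ is a $\mathsf{PDL}$-theorem, $\fp$ would itself be a $\mathsf{PDL}$-theorem, contrary to assumption. Applying the assumed strong completeness of $\mathsf{L}$ with respect to $\bm{X}$ with $\Gamma = \mathsf{Thm}$ then yields a non-modal homomorphism $s : \mathcal{L} \to \bm{X}$ with $1 \sqsubseteq \bigsqcap s[\mathsf{Thm}]$ and $1 \not\sqsubseteq s(\fp)$. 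Since $\bm{X}$ is integral, $1 \sqsubseteq a$ holds iff $a = 1$, so $s[\mathsf{Thm}] = \{1\}$ and hence $s \in S$.

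Finally, by the Truth Lemma (Lemma~\ref{lem--COM-Truth lemma}), since $\fp \in \Phi$ we have $V(\fp, s) = s(\fp)$, so $1 \not\sqsubseteq V(\fp, s)$; thus $\fp$ is not valid in $\mathfrak{M}$, hence not valid in the class of all $\bm{X}$-frames, which is the contrapositive we wanted. I expect the main obstacle to be already discharged by the lemmas that precede the theorem: the substantive content lies in Lemma~\ref{lem--COM-Boxed formulas and R}, Lemma~\ref{lem--COM-RPhi formula} and especially Lemma~\ref{lem--COM-R under RPhi} (whose $\pa^{+}$ case leans on the rule (R-$+$), on commutativity via Lemma~\ref{lem--COM-Bou}, and on the transitive-closure machinery of Proposition~\ref{prop--PRE-closure}), all of which feed into the Truth Lemma; note also that commutativity is needed already for the soundness of (A-$\bar{c}$) in Theorem~\ref{thm--soundness}. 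Given those lemmas, the proof of the theorem is a short assembly step, the only points requiring care being the identification $\mathsf{Thm} \not\vdash_{\mathsf{L}} \fp$ and the routine check that the canonical model is a legitimate $\bm{X}$-model.
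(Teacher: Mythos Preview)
Your proposal is correct and follows essentially the same route as the paper: invoke Theorem~\ref{thm--soundness} for one direction, and for the other take $\Phi$ to be the closure of $\{\fp\}$, use strong completeness of $\mathsf{L}$ to produce a state $s$ of the canonical model modulo $\Phi$ with $s(\fp) \neq 1$, and conclude via Lemma~\ref{lem--COM-Truth lemma}. Your justification of $\mathsf{Thm} \not\vdash_{\mathsf{L}} \fp$ via finitarity and your explicit appeal to integrality for $s[\mathsf{Thm}] = \{1\}$ are slightly more detailed than the paper's one-line ``$\mathsf{Thm}$ is obviously closed under $\vdash_{\mathsf{L}}$'', but the argument is the same.
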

\begin{proof}
Soundness is established by Theorem \ref{thm--soundness}. Completeness is established as usual. If $\fp$ is not in $\mathsf{Thm}$, then $\mathsf{Thm} \not\vdash_{\mathsf{L}} \fp$ since $\mathsf{Thm}$ is obviously closed under $\vdash_{\mathsf{L}}$. By strong completeness of $\mathsf{L}$, there is a non-modal homomorphism from $\mathcal{L}$ to $\bm{X}$ such that $s[\mathsf{Thm}] = \{ 1 \}$ and $s(\fp) \neq 1$. Let $\Phi$ be the closure of $\{ \fp \}$; $\fp$ is not valid in the canonical model modulo $\Phi$ by Lemma \ref{lem--COM-Truth lemma}.
\end{proof}

\section{On Kleene star and test}\label{sec--ISS}

Our syntactic presentation of propositional dynamic logic differs from the standard presentation in two important respects, namely, (i) our action operators do not include the \emph{Kleene star}, but rather the Kleene plus operator; (ii) we do not include the \emph{test operator}. Kleene star and test are instrumental in the ability of classical PDL to express standard programming constructs such as while loops and conditionals (test suffices for the latter). In this section we discuss these omissions.

Concerning the Kleene star, Proposition \ref{prop--PRE-Reflexive transitive closure} suggests that, working with frames based on finite integral FL-algebras, we can define, for all $\pa \in \mathit{ACT}$ and $\fp \in Fm(\mathcal{L}_{\bm{X}})$, \[ [\pa^{*}]\fp := [\pa^{+}]\fp \land \fp \] as a semantically equivalent surrogate for formulas with the Kleene star. For instance, $[(\apa \cup \apb)^{*} ; \apa^{*}]\afp$ is short for $[(\apa \cup \apb)^{+}]([\apa^{+}]\afp \land \afp) \land ([\apa^{+}]\afp \land \afp)$. However, it is clear that not all action expressions in $\mathit{STA}$ can be expressed by action expressions in $\mathit{ACT}$. Therefore, for example, $[(\apa^{*}; \apb)^{*}]\afp$ is not a well-formed formula since $\apa^{*} \not\in \mathit{ACT}$. 

The technical problem that precluded us from working with Kleene star as a primitive operator is related to Lemma \ref{lem--COM--R closed under Phi equivalence}. Take the reflexive transitive closure $R^{*}_{\pa}$ of $R_{\pa}$, defined as in Proposition \ref{prop--PRE-closure}. The issue is that Lemma \ref{lem--COM--R closed under Phi equivalence} fails if Kleene star is a primitive operator and we define $R_{\pa^{*}} := R_{\pa}^{*}$. In particular, if $s = u \neq t$, then $R^{*}_{\pa}su(ut) \sqsubseteq R_{\pa}^{*}st$ boils down to $s \sim t \sqsubseteq R_{\pa^{+}}st$, which does not hold in all canonical models. (Take the canonical $\bm{2}$-model modulo the closure $\Phi$ of $\Psi = \{ [\apa]\bot \}$. As both $\Psi \cup \{ \afp_0 \}$ and $\Psi \cup \{ \afp_1 \}$ are consistent, there are two distinct $s,t$ such that $s \sim_{\Phi} t$ equals $1$, but $R_{\apa^{+}}st$ equals $0$.)

Concerning test, a natural semantic interpretation of $\fp ?$, endorsed also in \cite{Hughes2006,Liau1999}, is 
\[
R_{\fp ?} (s,t) = \begin{cases}
V(\fp, s) & \text{if } s = t \\
\bot^{\bm{X}} & \text{otherwise.}
\end{cases}
\]
However, Lemma \ref{lem--COM--R closed under Phi equivalence} turns out to be problematic for such a relation as well. (Take the model from the previous paragraph and let $\fp = [\apa]\bot$; clearly $R_{\fp ?} ss (st)$ equals $1$, but $R_{\fp ?} st$ equals $0$.)

It is clear that a more substantial modification of our completeness argument is needed to accommodate logics with Kleene star and test. This is an interesting problem we leave open here.

\section{Conclusion}\label{sec--CON}

We have studied a general framework for many-valued versions of Propositional Dynamic Logic where both formulas in states and accessibility relations between states of a Kripke model are evaluated in a finite FL-algebra. We established a general decidability result and we provided a general completeness argument for PDLs based on commutative integral FL-algebras with canonical constants. We build on previous work on many-valued modal logic and our techniques are generalizations of the arguments used in the two-valued case; however, to the best of our knowledge, the technical results presented here are the first decidability and completeness results on PDL with many-valued accessibility relations. As our discussion of the informal interpretations of the framework suggests, many-valued PDL has links to existing research in description logics and potential applications in reasoning about weighted labelled transition systems.

Our paper also suggests a number of topics for future research. We would like to mention especially the addition of test and further work on the standard version of PDL with primitive Kleene star in the many-valued setting. Another topic are generalizations of our results beyond finite (commutative integral) FL-algebras with canonical constants; in many cases the work here would require modifications of  existing techniques used in completeness arguments for many-valued modal logics without ``structured'' modal operators. Finally, informal interpretations and applications of our framework need to be explored in more detail.



\end{document}